\newcommand{\mbold}[1]{\mbox{\boldmath{\ensuremath{#1}}}}   
\newcommand{\be}{\begin{equation}}
\newcommand{\ee}{\end{equation}}
\def \bea { \begin{eqnarray}}
\def \eea {\end{eqnarray}}
\def \R {\mathcal{R}}
\def \bl {\mbox{{\mbold\ell}}}
\def \bk {\mbox{{\mbold k}}}
\def \bn {\mbox{{\mbold n}}}
\newcommand{\M}[1]{{\stackrel{#1}{M}}}  % In iopart class, this should read {{\overset{#1}{M}}}
\newcommand{\Tb}{\mbold T}     %{{\mathbf{T}}}
\newcommand{\m}[1]{{\stackrel{#1}{m}}}  % In iopart class, this should read {{\overset{#1}{M}}}
\def \b0 {\mbox{{\mbold 0}}}
\newcommand{\pul}{{\textstyle{\frac{1}{2}}}}
\newtheorem{theorem}{Theorem}[section]
\newtheorem{lemma}[theorem]{Lemma}
\newtheorem{propos}[theorem]{Proposition}
\newtheorem{corol}[theorem]{Corollary}
\theoremstyle{definition}
\newtheorem{remark}[theorem]{Remark}
\numberwithin{equation}{section}
\newcommand*\tho{\text{\thorn}}
\newcommand*\dho{\text{\dh}}
\DeclareMathOperator{\Tr}{Tr}
\def \Bell {\mbox{{\mbold\ell}}}
\def \Bm {\mbox{{\mbold m}}}
\def \Bn {\mbox{{\mbold n}}}
\newcommand*\Be{\ensuremath{\textbf{e}}}
\def \BI {\mbold I}
\def \BS {\mbold S}
\def \BA {\mbold A}
\def \Brho {\mbold \rho}
\def \Bomega {\mbold \Omega}
\def \Bomegap {\mbold \Omega'}
\newcommand*\Omegap{\ensuremath{\Omega^{\prime}}}
\newcommand*\omegap{\ensuremath{\omega^{\prime}}}
\newcommand*\omegat{\ensuremath{\tilde \omega}}
\begin{document}
%----------------------------------------------------------------------------------------------------------------------------------

\title{Spacetimes of Weyl and Ricci type N in higher dimensions}

%----------------------------------------------------------------------------------------------------------------------------------
%    Information for first author
%----------------------------------------------------------------------------------------------------------------------------------
%\author{M. Kuchynka}

%----------------------------------------------------------------------------------------------------------------------------------
%    Information for second author
%----------------------------------------------------------------------------------------------------------------------------------
%\author{A. Pravdov\'a}

\author{M. Kuchynka$^{\diamond,\star}$,  A. Pravdov\' a$^\star$\\
\vspace{0.05cm} \\
{\small $^\diamond$ Mathematical Institute of Charles University}, {\small Sokolovsk\' a 83, 186 75  Prague 8, Czech Republic}  \\
{\small $^\star$ Institute of Mathematics, Academy of Sciences of the Czech Republic}, \\ {\small \v Zitn\' a 25, 115 67 Prague 1, Czech Republic} \\
 {\small E-mail: \texttt{kuchynkm@gmail.com, pravdova@math.cas.cz}} }

%----------------------------------------------------------------------------------------------------------------------------------
%General info
%----------------------------------------------------------------------------------------------------------------------------------
%\subjclass[2000]{Primary 54C40, 14E20; Secondary 46E25, 20C20}
%\date{January 1, 2001 and, in revised form, June 22, 2001.}
%\dedicatory{This paper is dedicated to our advisors.}

%\keywords{}
\maketitle

%----------------------------------------------------------------------------------------------------------------------------------
\begin{abstract}
We study  the geometrical properties of null congruences generated by an aligned null direction of the Weyl tensor (WAND) in spacetimes of the Weyl and Ricci type N 
(possibly with a non-vanishing cosmological constant) in an arbitrary dimension.   
We prove that a  type N  Ricci tensor and a
 type III or N Weyl tensor have to be aligned. In such  spacetimes, the
multiple 
WAND has to be geodetic.
For spacetimes with  type N aligned Weyl and Ricci tensors, the canonical form of the optical matrix in the twisting and non-twisting case is derived
and the dependence of the Weyl  and the Ricci tensors and Ricci rotation coefficients on the affine parameter of the 
geodetic null congruence generated by the WAND is obtained. 
\end{abstract}
%----------------------------------------------------------------------------------------------------------------------------------

%----------------------------------------------------------------------------------------------------------------------------------
\section{Introduction and summary}
%----------------------------------------------------------------------------------------------------------------------------------

In general, the Einstein equations and their various generalizations in higher dimensions are notoriously difficult to solve. Therefore, when studying various classes of spacetimes one has to resort to certain assumptions which lead to a considerable simplification of the field equations. One possibility is to make simplifying assumptions on the curvature of the spacetime.  In particular, assuming an
algebraically special spacetime (i.e. a spacetime that admits a multiple Weyl aligned null direction (mWAND)) in the  Weyl alignment classification 
 \cite{Milsonetal05,Coleyetal04} (see also \cite{review} for a recent review) 
 leads to a substantial  simplification of the field equations. Moreover,  it is usually also assumed  that the spacetime is Einstein (the Ricci tensor is proportional to the metric). Under  these simplifying assumptions, various results of interest have been already obtained.
For example, a generalization of the (necessary part of) the Goldberg-Sachs theorem  was obtained in five dimension \cite{Ortaggioetal12}, which subsequently led to the determination of all the
algebraically special vacuum solutions of the Einstein equations in five dimensions 
(including new solutions) \cite{5DHarvey-notwist}, \cite{5DHarvey-Kerr} and \cite{5DHarvey-twist}.

In this work, we want to go beyond the Einstein spaces and we set out to study
  Ricci type N\footnote{In this paper, algebraic types are considered  to be genuine unless stated otherwise, thus, e.g. by `type N' we mean type N not including type O.} 
 spacetimes {(possibly with a non-vanishing cosmological constant 
added)\footnote{Thus, more precisely, in our case the traceless Ricci tensor is
of type N.}}, i.e. spacetimes with the Ricci tensor of the form 
\be \label{RicciIntro} 
 R_{ab} = \lambda g_{ab}+\eta {k_a k_b} ,
\ee
 where $k_a$ is a null vector, $\eta$ is a non-vanishing scalar 
({`radiation density'}) 
and $\lambda$ is proportional to the cosmological constant.\footnote{In fact,
$\lambda= 2\Lambda /(d-2)$, where $\Lambda$ is the cosmological constant. Note
that the `$\lambda$' term in the Ricci tensor \eqref{RicciIntro} 
does not affect the Bianchi equations.} 
 Four-dimensional solutions of the Einstein equations with the Ricci tensor of the form \eqref{RicciIntro} {with $\lambda =0$}, representing spacetimes    
containing pure radiation (null dust, e.g. null Maxwell field), have  already been studied in detail 
in the literature (see e.g. \cite{stephani} and 
\cite{griffiths}  and references therein). In the present paper, we  study the geometrical properties of  Weyl type  N (and partly also of  more general type III) spacetimes with the  Ricci tensor of the form  \eqref{RicciIntro} in higher dimensions. 

Our assumptions are geometrical and thus our results, apart from Einstein  gravity,  also hold in the context of generalized gravities,
such as e.g. quadratic gravity. However, note  that the matter content of spacetimes with the Ricci tensor \eqref{RicciIntro} may be theory-dependent. Indeed, a Weyl and Ricci type N metric
representing a pure radiation spacetime in Einstein gravity may represent a vacuum spacetime in quadratic gravity \cite{MalPra11}.  

For Weyl type N and III spacetimes, there is a unique preferred null direction - a multiple Weyl aligned null direction (mWAND) $\bl$. Since, in principle, the Ricci tensor \eqref{RicciIntro} defines another null direction $\bk$, there may be two distinct prefered null directions 
(i.e., the Weyl and the Ricci tensor may be non-aligned). However, 
in section \ref{align}, we show that {\em for Weyl type III and N spacetimes, the Ricci tensor (\ref{RicciIntro}) is necessarily aligned with the Weyl tensor}\footnote{{For brevity,  whenever the Weyl and the traceless Ricci tensors are (non)aligned, we say that
the spacetime is (non)aligned.}}
(i.e. $\bk \propto \Bell $,\footnote{Thus, for Weyl and Ricci type N spacetimes, there is always a frame where both the Weyl  and the  Ricci tensors admit only boost weight $(-2)$ components. This implies that all rank-2 tensors constructed from the Weyl and the Ricci tensor that are at least quadratic in the Weyl/Ricci tensor vanish. This implies that apart from the Einstein equations, these metrics also solve the field equations of various generalizations of the Einstein theory, such as Gauss-Bonnet or Lovelock gravity, with a type N  stress energy tensor ($T_{ab} \propto \ell_a \ell_b$), see e.g. \cite{turek1} and
	\cite{turek2} for higher-dimensional Weyl and Ricci type N spacetimes 
	as solutions of generalized gravity theories.} 
	see theorem \ref{nonaligned}), thus generalizing  an earlier result by Wils \cite{wils} which was obtained in four dimensions. 

We then proceed with studying the geodeticity of $\Bell$, arriving at the conclusion that {\em for a  Weyl type III and N spacetime with the Ricci tensor of the form \eqref{RicciIntro}, the common aligned null direction of the Weyl and the Ricci tensor is geodetic} (theorem \ref{geodeticANDs}).

A next natural step, which we take in sections \ref{sec_notwist}, \ref{sec_twist},  is to study geometrical properties of this preferred  geodetic null direction, which are encoded  
in the  optical matrix $\rho_{ij}$ (introduced in section \ref{sec_optic}). However, note  that here we limit ourselves to the type N case since the type III case has not yet been completely resolved  even in the simpler case of Einstein spacetimes 
(see \cite{bianchiclanek,bianchiclanekcor} and \cite{review} for partial results).

For Einstein type N spacetimes, the canonical form of the  optical matrix 
is
\cite{bianchiclanek,bianchiclanekcor,ghpclanek}
\begin{equation}\label{EinsteinN}
\boldsymbol{\rho} = s
\left(\begin{array}{cc|c}
1 & a & \multirow{2}{*}   {\mbold{0}} \\ 
-a & 1    \\ \hline
 \multicolumn{2}{c|}{\boldsymbol{0}} & \boldsymbol{0} \\
\end{array}
\right).
\end{equation}
In this case, the optical matrix obeys the so-called optical constraint $\rho_{ik} \rho_{jk} \propto \rho_{(ij)} $.   Note that for Einstein spacetimes, the optical constraint also holds in much more general situations (e.g. for Kerr-Schild metrics \cite{OrtPraPra09,MalPra11},  non-degenerate geodetic double WANDs in asymptotically
flat type II vacuum spacetimes \cite{OrtPraPra09b}, `general' non-twisting type II Einstein spacetimes \cite{clanek11}, etc).
This result was used in   \cite{OrtPraPra09} and  \cite{OrtPraPra10} to integrate certain Bianchi identities to determine the $r$-dependence (where $r$ is an affine parameter of the null congruence corresponding to the geodetic multiple WAND) 
of the curvature and to study asymptotic properties of such spacetimes.

We show that in contrast to the type N Einstein case,  the optical constraint does not hold if the 
%type N  
the term $\eta k_a k_b$ is present in the  Ricci tensor 
(unless, trivially,  the spacetime is Kundt, i.e. $\boldsymbol{\rho=0}$). 
Nevertheless, the symmetric and skew-symmetric parts of the optical matrix $\rho_{ij}$ still commute and thus $\rho_{ij}$ is a normal\footnote{
Real normal matrix $A$ satisfies $A^T A=AA^T$.} matrix. Similarly to  the Einstein case, we find that the rank of $\rho_{ij}$ is at most 2, however, the $2\times 2$ block is now shearing  and thus the canonical form of $\rho_{ij}$ is  
(theorems \ref{twistfree} and \ref{twistingmatrix}) 
\begin{equation}\label{RicciNN}
\boldsymbol{\rho} = s
\left(\begin{array}{cc|c}
1 & a & \multirow{2}{*}   {$\boldsymbol{0}$} \\
-a & b    \\ \hline
 \multicolumn{2}{c|}{\boldsymbol{0}} & \boldsymbol{0} \\
\end{array}
\right).
\end{equation}
For the conformally flat case, the optical matrix vanishes completely (corollary \ref{c-flat}). 
Let us summarize possible subcases in  table \ref{matrixrho}.
\begin{table}[htb]
  \begin{center}
  \begin{tabular}{|c|l|l|} 
    \hline 
		rank of $\rho_{ij}$ &   conditions & spacetime \\ \hline 
		0 & $s=0$ & Kundt \\
1 & $s\not= 0$, $b=0$ & non-twisting\\
2 & $s\not= 0$, $b=1$ & Einstein \\
	2 & $s\not= 0$, $b\not= 0,1$, $a=0$ & non-twisting \\
	2 & $s\not= 0$, $b\not= 0,1$, $a\not=0$ & twisting \\ \hline
  \end{tabular}
  \caption{Possible subcases of the optical matrix \eqref{RicciNN} for spacetimes of the Weyl type N and 
	the Ricci tensor of the form \eqref{RicciIntro}. Note that the first ($s=0$) case
	also allows  the Weyl type O.}
  \label{matrixrho}
   \end{center}
\end{table}

For a  normal optical matrix $\rho_{ij}$, the Sachs equation (see \eqref{fullsachs}), 
which determines the evolution of the optical matrix along the multiple WAND $\bl$, can be integrated 
\cite{OrtPraPra10}.  Thus, in section \ref{alleq}, the $r$-dependence (where $r$ is the affine parameter
along $\bl$) of the  `radiation density' and
the Weyl tensor, as well as some Ricci rotation coefficients, is determined.

Finally, we briefly discuss a direct product of an algebraically special 
aligned Ricci type N  spacetime 
and a Euclidean space. For the resulting spacetime, we show that it is again an algebraically special spacetime of Ricci type N with the same WAND as that of the original spacetime. It turns out that the resulting spacetime does not change its primary Weyl type, unless it is conformally flat. We also show the connection between the geometry of the congruence generated by the multiple WAND in the original and the resulting spacetime. In particular, the resulting spacetime is necessarily shearing, unless the original spacetime is Kundt. 
%We also present examples of higher-dimensional spacetimes of Weyl and  Ricci type N with various optical properties using four-dimensional  pure radiation metrics (sec. \ref{priklad}). 
The four-dimensional  pure radiation metrics given in table 3 can be used as seed metrics to construct higher-dimensional spacetimes of Weyl and Ricci type N with various optical properties (sec. \ref{priklad}).

%----------------------------------------------------------------------------------------------------------------------------------
\section{Preliminaries}
\label{sec_prelim}
%----------------------------------------------------------------------------------------------------------------------------------

%----------------------------------------------------------------------------------------------------------------------------------
\subsection{Null frames and Lorentz transformations}
%----------------------------------------------------------------------------------------------------------------------------------
Let $d$ denote the dimension of a spacetime. 
In a $d$-dimensional spacetime, we shall consider a \textit{null frame} $\{ \Be_{(a)} \}$, i.e. a real frame 
\begin{equation}\label{frame}
\{  \Bell \equiv \Be_{(0)} = \Be^{(1)} ,\ 
\Bn \equiv \Be_{(1)} = \Be^{(0)} ,\ \Bm_{(i)} \equiv \Be_{(i)} = \Be^{(i)}  \}
\end{equation}
with two null vector fields $\Bell$ and $\Bn$ and $d-2$ spacelike vector fields 
$\Bm_{(i)}$ 
such that they satisfy the following relations 
\begin{equation}\label{ortogonality}
\ell_a \ell^a = n_a n^a = \ell_a m_{(i)}^a = n_a m_{(i)}^a = 0, \qquad
\ell_a  n^a = 1, \qquad 
 m_a^{(i)}  m_{(j)}^a = \delta_j^i .
\end{equation}
The metric expressed in terms of dual null frame vectors thus reads 
\begin{equation}\label{metric}
g_{ab} = 2\ell_{(a} n_{b)} + \delta_{ij} m_a^{(i)} m_b^{(j)}.
\end{equation}
Throughout the paper, we make use of two types of indices: indices $a,b,\dots$ take values $0,\dots,d-1$, while indices $i,j,\dots$ take values $2,\dots,d-1$, unless  stated otherwise. We employ the Einstein summation convention for both types of indices, however, since the frame indices $i,j,\dots$ are raised and lowered  by $\delta_{ij}$, we do not distinguish between the covariant and contravariant null frame tensor components  corresponding to indices $i,j,\dots$. When we want to emphasize that there is no summation over repeated indices, we put them both in brackets.

Relations \eqref{ortogonality} are preserved under local Lorentz transformations acting on a tangent space. Every proper orthochronous Lorentz transformation can be decomposed into 
\textit{boost} with a positive function $\lambda$
\begin{equation}\label{boost}
\Bell \mapsto \lambda \Bell , \qquad \Bn \mapsto \lambda^{-1} \Bn, \qquad 
\Bm_{(i)} \mapsto  \Bm_{(i)},
\end{equation}
\textit{spin} determined by $X_{ij} \in SO(d-2)$ at a given point of the spacetime
\begin{equation}\label{spin}
\Bell \mapsto \Bell, \qquad 
\Bn \mapsto \Bn, \qquad
 \Bm_{(i)} \mapsto X_{ij} \Bm_{(j)},
\end{equation}
and \textit{null rotations} about vectors $\Bell$ or $\Bn$ determined by a set of (real) functions $z_i$
\begin{equation}\label{nullrotation}
\begin{aligned}
\Bell \mapsto \Bell, \qquad 
\Bn \mapsto \Bn + z^i \Bm_{(i)} - \frac{1}{2} z^i z_i \Bell, \qquad
\Bm_{(i)} \mapsto \Bm_{(i)} - z_i \Bell,\\
\Bn \mapsto \Bn, \qquad 
\Bell \mapsto \Bell + z^i \Bm_{(i)} - \frac{1}{2} z^i z_i \Bn, \qquad
\Bm_{(i)} \mapsto \Bm_{(i)} - z_i \Bn,
\end{aligned}
\end{equation}
respectively. 

%----------------------------------------------------------------------------------------------------------------------------------
\subsection{Optical matrix and optical scalars}
\label{sec_optic}
%----------------------------------------------------------------------------------------------------------------------------------
Consider a covariant derivative $L_{ab} \equiv \nabla_b \ell_a$ of the null frame vector $\Bell$. Projecting $L_{ab}$ on the frame vectors  $\Be_{(a)}$, 
one obtains its null frame components 
\begin{equation}\label{Lcomponents}
L_{(a)(b)} \equiv L_{cd} e_{(a)}^c e_{(b)}^d.
\end{equation}
The vector field $\Bell$ is tangent to a geodetic null congruence if and only if 
$\kappa_i\equiv L_{(i)(0)} = 0$. Then, such geodetic congruence can be always affinely parametrized, i.e. $L_{(1)(0)}=0$.

 Similarly, one can project covariant derivatives of the remaining frame vectors,
\begin{equation}
N_{ab} = \nabla_b n_a, \qquad \M{(i)}_{ab} = \nabla_b m^{(i)}_{a},
\end{equation}
 into the basis \eqref{frame} to obtain the scalars  
$N_{(a)(b)}$, $\M{(i)}_{(a)(b)}$. From  (\ref{ortogonality}), 
these scalars satisfy the identities
\begin{equation}\label{eqn:ident1}
  \quad\quad N_{(0)(a)} + L_{(1)(a)} = 0, \quad \M{(i)}_{(0)(a)} + L_{(i)(a)} = 0,
  \quad \M{(i)}_{(1)(a)} + N_{(i)(a)} = 0, \quad \M{(i)}_{(j)(a)} + \M{(j)}_{(i)(a)} = 0,
\end{equation}
and
\begin{equation}\label{eqn:ident2}
  L_{(0)(a)} = N_{(1)(a)} = \M{(i)}_{(i)(a)} = 0.
\end{equation}

For a geodetic null field $\bl$,  the so called optical matrix  $\rho_{ij} \equiv L_{(i)(j)}$  contains information about certain geometric properties of the null congruence, namely, one can define    the \textit{optical scalars} $\sigma$ (\textit{shear}), $\theta$ (\textit{expansion}) and $A$ (\textit{twist}) of $\Brho$ in the following way: 
\begin{equation}
\sigma^2 \equiv \sigma_{ij} \sigma^{ij}, \qquad 
\theta \equiv \frac{1}{d-2} \rho \indices{^i_i},\qquad
A^2 \equiv A_{ij}A^{ij},
\end{equation}
where  $\sigma_{ij} \equiv S_{ij}-\theta \delta_{ij}$  is the traceless part of the symmetric  part $\BS$ of $\Brho$ and $\BA$ is the  skew-symmetric part of $\Brho$. 
We say that a spacetime is non-twisting or twisting if $A_{ij} =0$
or $A_{ij} \neq 0$, respectively.

%----------------------------------------------------------------------------------------------------------------------------------
\subsection{GHP formalism}
%----------------------------------------------------------------------------------------------------------------------------------
Throughout the paper, we use some of the features of the higher-dimensional Geroch–Held–Penrose (GHP) formalism that was developed in \cite{ghpclanek}. 
The basic notion of GHP formalism is a \textit{GHP scalar}. We say that an object 
$T_{i_1 \dots i_s}$ is a GHP scalar of spin weight $s$ and boost weight $b$ if it transforms as 
\begin{equation}
T_{i_1 \dots i_s} \mapsto X_{i_1 j_1} \dots X_{i_s j_s} T_{j_1 \dots j_s}
\end{equation}
under spins $X_{ij}$ and as
\begin{equation}
T_{i_1 \dots i_s} \mapsto \lambda^b T_{i_1 \dots i_s}
\end{equation}
under boosts $\lambda$. Note that, for example, the frame components 
\bea
&& \omega'\equiv R_{(1)(1)} \equiv R_{ab}n^a n^b,\ \ \label{omegap}\\
&& \omega\equiv R_{(0)(0)} \equiv R_{ab}\ell^a \ell^b,\label{omega}
\eea
 of the Ricci tensor $R_{ab}$ are GHP scalars of spin and boost weights $0$, $(-2)$ and
$0$, $(+2)$, respectively, while the matrix  
\be
\Omega'_{ij}\equiv C_{(1)(i)(1)(j)} \equiv C_{abcd}n^a m_{(i)}^b n^c m_{(j)}^d
\label{Omega}
\ee
 of frame components of the Weyl tensor is a GHP scalar of spin weight $2$ and boost weight $(-2)$.

For GHP scalars,\footnote{For brevity, we omit brackets in the scalar indices.} we introduce  the notation following \cite{ghpclanek}, shown in table \ref{tab:weights}.
\begin{table}[ht]
 \begin{center}
   \begin{tabular}{|c|c|c|c|l|}
    \hline Quantity & Notation & Boost weight $b$ & Spin $s$ & Interpretation\\ [1mm]\hline
    $L_{ij}$  & $\rho_{ij}$  & 1  & 2 & expansion, shear and twist of $\bl$\\[1mm]
     $L_{ii}$  & $\rho=\rho_{ii}$  & 1  & 0 & expansion of $\bl$\\[1mm]
    $L_{i0}$  & $\kappa_{i}$   & 2  & 1 & non-geodesity of $\bl$\\[1mm]
    $L_{i1}$  & $\tau_{i}$   & 0  & 1 & transport of $\bl$ along $\bn$\\[1mm]
    $N_{ij}$  & $\rho'_{ij}$ & -1 & 2 & expansion, shear and twist of $\bn$\\[1mm]
     $N_{ii}$  & $\rho'=\rho'_{ii}$ & -1 & 0 & expansion of $\bn$\\[1mm]
     $N_{i1}$  & $\kappa'_{i}$  & -2 & 1 & non-geodesity of $\bn$\\[1mm]
    $N_{i0}$  & $\tau'_{i}$  & 0  & 1 & transport of $\bn$ along $\bl$\\[1mm]\hline
  \end{tabular}
  \caption{\label{tab:weights}A higher-dimensional generalization of GHP scalars corresponding to the Ricci rotation coefficients \cite{ghpclanek}.}
 \end{center}
\end{table}

{In type III spacetimes, there is a frame such that the type III Weyl tensor has the 
form\footnote{The operation $\{ \}$ is defined as $2 T_{\{ abcd\} } \equiv T_{[ab][cd]} + T_{[cd][ab]}$. }

\begin{equation}
C_{abcd} = 8 \Psi_{i}^\prime \ell_{\{a} n_{b} \ell_{c} m_{d\}}^{(i)} + 
		4 {\Psi}_{ijk}^\prime  \ell_{\{a} m_b^{(i)} m_c^{(j)} m_{d\}}^{(k)} 
+ 4 \Omegap_{ij} \ell_{\{a} m_{b}^{(i)} \ell_c m_{d\}}^{(j)}.\label{WeylIII}
\end{equation}
where ${\Psi}_{ijk}^\prime=-{\Psi}_{ikj}^\prime$, ${\Psi}_{i}^\prime={\Psi}_{jij}^\prime$ and ${\Psi}_{[ijk]}^\prime=0$.
For the type N Weyl tensor, only terms with $\Omegap_{ij}$ are non-vanishing, i.e. 
\begin{equation}\label{weyltensor}
C_{abcd} = \Omegap_{ij}  \ell_{\{a} m_b^{(i)} \ell_c m_{d\}}^{(j)}, 
\end{equation}
with $\Omegap_{ij}$ being symmetric and traceless.}

Further, one can define a \textit{GHP derivative operators} $\tho, \tho^\prime$ and $\dho_i$ acting on  a GHP scalar $\Tb$ of boost weight $b$ and spin $s$  and creating again a GHP scalar
(see (2.15)--(2.17) in \cite{ghpclanek}):
  \begin{eqnarray}
    \tho T_{i_1 i_2...i_s} &\equiv & D T_{i_1 i_2...i_s} - b L_{(1)(0)} T_{i_1 i_2...i_s} 
                                     + \sum_{r=1}^s \M{(k)}_{(i_r) (0)} T_{i_1...i_{r-1} k i_{r+1}...i_s},
																		\label{thorn}\\
    \tho' T_{i_1 i_2...i_s} &\equiv & \Delta T_{i_1 i_2...i_s} - b L_{(1)(1)} T_{i_1 i_2...i_s} 
                                     + \sum_{r=1}^s \M{(k)}_{(i_r)(1)} T_{i_1...i_{r-1} k i_{r+1}...i_s},\\
    \eth_i T_{j_1 j_2...j_s} &\equiv & \delta_i T_{j_1 j_2...j_s} - b L_{(1)(i)} T_{j_1 j_2...j_s} 
                                     + \sum_{r=1}^s \M{(k)}_{(j_r) (i)} T_{j_1...j_{r-1} k j_{r+1}...j_s},\label{eth}
  \end{eqnarray}
	where the derivatives along the frame vectors are defined as
\be
D \equiv \ell^a \nabla_a, \qquad \bigtriangleup  
\equiv n^a \nabla_a, \qquad \delta_i \equiv m^{a}_{(i)} \nabla_a .\label{deriv}
\ee

For more details and for other features of the higher-dimensional GHP formalism, see \cite{ghpclanek}. 

%----------------------------------------------------------------------------------------------------------------------------------
%\subsection{Equations employed in the aligned and the non-aligned case}
%----------------------------------------------------------------------------------------------------------------------------------
In this paper, projections of the Bianchi identity 
\begin{equation}
\nabla_{[a}R_{bc]de}=0
\end{equation}
onto the null frame \eqref{frame} (\!\!\cite{ghpclanek}, see also \cite{bianchiclanek}) are used.
The complete set of all the independent null frame components of the Bianchi identity for algebraically special Einstein spacetimes using the GHP notation was given 
in \cite{ghpclanek}.  The Bianchi equations with the non-trivial Ricci tensor are obtained by simple replacement (2.43)--(2.48) in \cite{ghpclanek}.

Now, one has to distinguish between an {\it aligned} and a {\it non-aligned} case. 
In the {aligned} case, one has $\langle \Bell \rangle = \langle \bk %\boldsymbol{k} 
\rangle$\footnote{Here, $\langle \Bell \rangle$ denotes the equivalence class of vectors having the same direction as $\Bell $, i.e. $\langle \Bell \rangle = \langle \bk 
\rangle$ iff $\Bell \propto \bk 
$.}, i.e. 
the Weyl and the  
Ricci tensors have a common aligned null direction. Otherwise, we say that 
the Weyl and the   
Ricci tensors are non-aligned. 
If this is the case, one can further choose the null frame vector $\bn$ such that $\bn \propto \bk $.

For  the {\it non-aligned} 
Ricci tensor \eqref{RicciIntro}, where $\bn=\bk$,
 the only necessary replacement in the Bianchi equations   is 
 (2.43) in \cite{ghpclanek},  i.e.
\begin{equation}\label{substitut}
\Omega_{ij} \to \Omega_{ij} + \omegat \delta_{ij}.
\end{equation}
 Here, $\omegat$ is defined as 
\be
\omegat \equiv \frac{\eta}{d-2}.
\ee
The Bianchi equations (B2), (B3) and (B4) in \cite{ghpclanek} then read
\begin{equation}\label{primedB2}
\tho^\prime \omegat \delta_{ij} = - (\Psi_{j}^\prime \delta_{ik} - \Psi_{jik}^\prime) \kappa_k 
- \omegat \delta_{ik} \rho_{kj}^\prime,
\end{equation}
\begin{equation}\label{primedB3}
  \Psi_{[i|kl|}^\prime \kappa_{j]} +  \Psi_{[k|ij|}^\prime \kappa_{l]} + \omegat \delta_{i[k} \rho_{|j|l]}^\prime 
-\omegat \delta_{j[k} \rho_{|i|l]}^\prime =  0,
\end{equation}
\begin{equation}\label{primedB4}
\omegat \delta_{i[j} \rho_{kl]}^\prime = 0,
\end{equation}
respectively. Apart from the Bianchi equations, we will also employ equations (2.50) and (2.51)  of \cite{ghpclanek}, {which for the Ricci tensor of the form \eqref{RicciIntro} reduce to} 
\bea\label{2.50}
 \tho^\prime \omega &=& - \rho^\prime \omega,\\
\kappa'_i \omega &=&0,\label{geodk}
\eea
where $\rho^\prime$ denotes the trace of $\rho_{ij}^\prime$ and
 $\omega=\eta$.

For the {\it aligned} Ricci  \eqref{RicciIntro} and Weyl  \eqref{weyltensor} tensors, where  $\bl=\bk$,
 the only necessary replacement in the Bianchi equations   is (2.43')
 in \cite{ghpclanek}, i.e.
\begin{equation}\label{substitutp}
\Omegap_{ij} \to \Omegap_{ij} + \omegat \delta_{ij}.
\end{equation}
We will employ equations (B2'), (B3') and (B4') of \cite{ghpclanek} 
\begin{equation}\label{B1}
\tho \Omegap_{ij} + \tho \omegat \delta_{ij} = - \Omegap_{ik}\rho_{kj} - \omegat\rho_{ij},
\end{equation}
\begin{equation}\label{B2}
\Omegap_{i[k}\rho_{|j|l]} + \omegat \delta_{i[k} \rho_{|j|l]} = 
\Omegap_{j[k}\rho_{|i|l]} + \omegat \delta_{j[k} \rho_{|i|l]},
\end{equation}
\begin{equation}\label{B3}
 \Omegap_{i[j}\rho_{kl]} + \omegat \delta_{i[j} \rho_{kl]} = 0
\end{equation}
and  (2.50') and (2.51) in \cite{ghpclanek} 
\bea
 \tho \omega' &=& - \rho \omegap,\label{2.50p}\\
\kappa_i \omegap &=&0,\label{2.51p}
\eea
where $\omegap=\eta$.

%---------------------

%----------------------------------------------------------------------------------------------------------------------------------
\section{ 
Alignment and geodeticity for Weyl type III and N,  Ricci type N spacetimes}
\label{align}
%----------------------------------------------------------------------------------------------------------------------------------

In \cite{wils}, it is proven that non-aligned pure radiation spacetimes of the Weyl type  III and N do not exist in four dimensions (see \cite{wils}, Theorem 3). It turns out that this is also true in arbitrary dimension.

\begin{propos}\label{nonaligned}
Weyl type III and N spacetimes with the Ricci tensor of the form \eqref{RicciIntro}  are necessarily aligned.
\end{propos}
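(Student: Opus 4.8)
The plan is to argue by contradiction, so suppose the spacetime is non-aligned. Then, as explained above, one may choose the frame so that $\bn=\bk$; the traceless part of \eqref{RicciIntro} is then $\eta\,n_an_b$, aligned with $\bn$. Consequently $\omega\equiv R_{(0)(0)}=\eta\neq0$ and $\omegat=\eta/(d-2)\neq0$, whereas, since $\bn$ is null, $\omega'\equiv R_{(1)(1)}=R_{ab}n^an^b=0$. The first thing I would record is that \eqref{geodk}, together with $\omega\neq0$, forces $\kappa'_i=0$, i.e. $\bn$ is geodetic.

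For the Weyl type N case I would next pin down the optical matrix of $\bn$. Inserting $\Psi'_i=\Psi'_{ijk}=0$ into \eqref{primedB2} gives $(\tho'\omegat)\,\delta_{ij}=-\omegat\,\rho'_{ij}$, so $\rho'_{ij}=f\,\delta_{ij}$ is pure trace with $f=-\tho'\omegat/\omegat$. Comparing the trace of this relation with \eqref{2.50} (which yields $\tho'\omegat=-\rho'\,\omegat$, hence $f=\rho'$) gives $f=(d-2)f$, so $f=0$ for every $d\geq4$; therefore $\rho'_{ij}=0$.

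Now comes the step I expect to close the argument. With $\kappa'_i=0$ and $\rho'_{ij}=0$, the identities \eqref{eqn:ident2} imply that all surviving frame components of $\nabla_bn_a$ carry $c=0$ or $d=0$, i.e. an explicit factor of $\bn$; thus $\nabla_bn_a=n_aV_b+W_an_b$ for some $V_b$ and some $W_a$ with $W_an^a=0$. Feeding this into the Ricci identity $[\nabla_c,\nabla_d]n_a=R_{abcd}\,n^b$ and contracting the free indices with $n^am_{(i)}^bn^cm_{(j)}^d$, every resulting term drops out by the null relations $n_an^a=0=n_am_{(i)}^a$, so $R_{(1)(i)(1)(j)}=0$. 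Because $R_{ab}$ has the special form \eqref{RicciIntro}, the difference $R_{(1)(i)(1)(j)}-C_{(1)(i)(1)(j)}$ (built from $g_{ab}$ and $R_{ab}$) also vanishes in this contraction, each of its terms carrying a factor $g_{ab}n^an^b$, $n_am_{(i)}^a$, or $R_{ab}n^an^b=\omega'=0$. Hence $\Omega'_{ij}=C_{(1)(i)(1)(j)}=0$ and the Weyl tensor \eqref{weyltensor} is of type O, contradicting the assumption that it is genuinely of type N.

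For the Weyl type III case the same endgame would work once one knows that all negative boost-weight Weyl components vanish, but the conclusion is no longer immediate: in \eqref{primedB2} the source $\Psi'_j\kappa_i-\Psi'_{jik}\kappa_k$ now survives, and \eqref{primedB3} couples $\Psi'_i,\Psi'_{ijk}$ to $\rho'_{ij}$ and to the (a priori nonzero) non-geodesity $\kappa_i$ of $\Bell$. The hard part will be to disentangle this coupled system, using \eqref{primedB2}--\eqref{primedB4} together with \eqref{2.50}, so as to force both $\rho'_{ij}=0$ and $\Psi'_i=\Psi'_{ijk}=0$; only then does the curvature-identity argument above yield $C_{abcd}=0$ from \eqref{WeylIII}. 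The borderline dimension $d=4$, where \eqref{primedB4} carries no information (an antisymmetrisation over three transverse indices taking only two values), is in any case covered by Wils's four-dimensional theorem \cite{wils}.
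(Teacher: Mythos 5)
Your argument for the Weyl type N case is correct and, in its endgame, genuinely different from the paper's. You obtain $\rho'_{ij}=0$ very economically by observing that for type N equation \eqref{primedB2} forces $\rho'_{ij}$ to be pure trace and then playing its trace off against \eqref{2.50}, and you then kill $\Omega'_{ij}=C_{(1)(i)(1)(j)}$ by a direct Ricci-identity computation from $\nabla_b n_a=n_aV_b+W_an_b$, so that \eqref{weyltensor} gives $C_{abcd}=0$. The paper instead reaches $\rho'_{ij}=0$ by a longer chain of traces and contractions of \eqref{primedB2}--\eqref{primedB4} with $\kappa^i$ (designed to work for type III as well) and then closes by invoking Proposition 2 of \cite{ricciclanek}, which promotes the Kundt congruence $\bk$ to a second multiple WAND and contradicts the uniqueness of the mWAND for type III/N; your computation is in effect a self-contained special case of that cited proposition, which is a gain in transparency for type N. The appeal to \cite{wils} for $d=4$ matches the paper.

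However, the proposition also asserts the result for Weyl type III, and there your proof has a genuine gap that you flag but do not close. Two separate things are missing. First, for type III the step $\rho'_{ij}=0$ no longer follows from \eqref{primedB2} alone, since the source terms $\Psi'_j\kappa_i-\Psi'_{jik}\kappa^k$ survive; the paper disposes of them by contracting \eqref{primedB2}--\eqref{primedB4} with $\kappa^i\kappa^j$ until it reaches $(d-3)(d-4)\,\omegat\,\rho'\,\kappa_i\kappa^i=0$ and then treating the two branches ($\rho'=0$ or $\kappa_i=0$) separately. Second, and more seriously, even granted $\kappa'_i=\rho'_{ij}=0$, your curvature-identity computation only yields $C_{(1)(i)(1)(j)}=0$, i.e.\ that $\bk$ is a WAND; for type III the components $\Psi'_i,\Psi'_{ijk}$ in \eqref{WeylIII} are untouched by it, so you cannot conclude $C_{abcd}=0$. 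The paper never needs to: its contradiction is not $C=0$ but the existence of a second multiple WAND, and Proposition 2 of \cite{ricciclanek} supplies exactly the extra information (vanishing of the boost-weight $+1$ Riemann components along the Kundt direction $\bk$, using $R_{ab}k^b\propto k_a$) that your computation does not reach. To complete your route you would have to either reproduce that proposition or extract $\Psi'_i=\Psi'_{ijk}=0$ from the remaining Bianchi identities, neither of which is done.
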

\begin{proof} 
Thanks to Theorem 3 of \cite{wils}, it is sufficient to prove the assertion for $d>4$. 
Let $\Bell$ be a multiple WAND of a spacetime.  
Suppose that $\bk $ is the aligned null direction of the Ricci tensor such that 
$\langle \bk \rangle \neq \langle \Bell \rangle$. 
Then, one can consider a null frame $\{ \Be_{(a)} \}$ with $\Be_{(0)} \equiv \Bell$ and $\Be_{(1)} \equiv \bk $.  
In this null frame, the Ricci tensor has the form \eqref{RicciIntro} (with non-vanishing $\eta$), while the Weyl tensor takes  the form \eqref{WeylIII}.

First, by contracting equation \eqref{primedB4} with respect to $i$ and $j$, one obtains 
\begin{equation}
(d-4) \rho_{[kl]}^\prime = 0,
\end{equation}
i.e. $\bk $ is non-twisting for every $d>4$ 
(and thus also geodetic).\footnote{For a non-twisting null congruence $\bk$, one can introduce a foliation $u=$const.,
such that $k_a=u,_a$. Then $k_{a;b}k^b=u_{;ab}k^b=u_{;ba}k^b=k_{b;a}k^b=0$. 
In the non-aligned case,  
this can be also seen directly from \eqref{geodk}.}
Now, we will prove that  the optical matrix $\rho_{ij}^\prime$
associated with $\bk$ in fact vanishes.
Substituting \eqref{2.50} into \eqref{primedB2}, one has that 
\begin{equation}\label{B2plus2.50}
-\rho^\prime \omegat \delta_{ij} = - \Psi_{j}^\prime \kappa_{i} + \Psi_{jik}^\prime \kappa^k - \omegat \rho_{ij}^\prime.
\end{equation}
Then, taking the trace of \eqref{B2plus2.50}, one arrives at
\begin{equation}\label{rce2}
2 \Psi_{i}^\prime \kappa^i = (d-3) \rho^\prime \omegat,
\end{equation}
while the contraction of \eqref{B2plus2.50} with $\kappa^i \kappa^j$ gives 
\begin{equation}\label{rce4}
\rho^\prime \omegat \kappa_i \kappa^i = \Psi_{j}^\prime \kappa^j \kappa_i \kappa^i 
- \Psi_{jik}^\prime \kappa^i \kappa^j \kappa^k + \omegat \rho_{ij}^\prime \kappa^i \kappa^j.
\end{equation}
Using \eqref{rce2} and $\Psi_{ijk}^\prime = - \Psi_{ikj}^\prime$, equation \eqref{rce4}  simplifies to 
\begin{equation}\label{rce3}
2 \rho_{ij}^\prime \kappa^i \kappa^j = -(d-5) \rho^\prime \kappa_i \kappa^i.
\end{equation}
Tracing equation \eqref{primedB3} in $i$ and $k$ leads to 
\begin{equation}\label{rceskoro5}
2 \Psi_{(l}^\prime \kappa_{j)} + 2 \Psi_{(j|k|l)}^\prime \kappa^k - (d-4) \omegat \rho_{jl}^\prime 
- \omegat \rho^\prime \delta_{jl} = 0.
\end{equation}
Contracting \eqref{rceskoro5} with $\kappa^j \kappa^l$, one obtains
\begin{equation}\label{rce5}
2 \Psi_{l}^\prime \kappa^l \kappa_j \kappa^j - (d-4) \omegat \rho_{jl}^\prime \kappa^j \kappa^l 
- \omegat \rho^\prime \kappa_j \kappa^j = 0.
\end{equation}
Finally, substituting  \eqref{rce2} and \eqref{rce3} into \eqref{rce5}, one obtains
\begin{equation}\label{rce6}
(d-3)(d-4) \omegat \rho^\prime \kappa_i \kappa^i = 0,
\end{equation}
thus, either $\rho^\prime = 0$, i.e. $\bk $ is non-expanding, and Proposition 1 in \cite{ricciclanek} implies $\rho_{ij}^\prime = 0$, 
or $\kappa_i = 0$, i.e. $\Bell$ is geodetic, and equation \eqref{rceskoro5} reduces to 
\begin{equation}\label{rceskoro5reduced}
 (d-4) \omegat \rho_{jl}^\prime 
+ \omegat \rho^\prime \delta_{jl} = 0.
\end{equation}
Its  trace implies  $\rho^\prime = 0$.
Thus, we again arrive at $\rho_{ij}^\prime = 0$.
Since assumptions of Proposition 2 of \cite{ricciclanek} are met, 
$\bk $ 
must be a multiple WAND. However, since we assume that the spacetime is of Weyl type III or N with (the only) multiple WAND $\Bell$, 
one has that necessarily $\langle \bk \rangle = \langle \Bell \rangle$, which is a contradiction.
\end{proof}

Thus,  the type III or N Weyl tensor and the Ricci tensor \eqref{RicciIntro}  
are always aligned,
and, without loss of generality, we can set  $\bk = \Bell$.
The boost weight $(-2)$ component \eqref{omegap} of the  Ricci tensor \eqref{RicciIntro} is then
$\omegap=\eta$.

Let us study geodeticity of the aligned direction $\bl$.
Equation \eqref{2.51p}   immediately implies $\kappa_i =0$ and thus  $\bl$ is  geodetic.
Therefore
\begin{propos}\label{geodeticANDs}
In a  Weyl type III or N 
spacetime with the Ricci tensor of the form \eqref{RicciIntro}, the common aligned null direction of the Weyl and the  Ricci tensors is geodetic. 
\end{propos}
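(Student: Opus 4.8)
The plan is to reduce the geodesity of the common null direction to a single algebraic consequence of the Bianchi identities. First I would invoke Proposition~\ref{nonaligned}, which already guarantees that the Weyl and Ricci tensors are aligned; this permits setting $\bk=\Bell$, so that the multiple WAND coincides with the null direction singled out by the Ricci tensor~\eqref{RicciIntro}. Recalling the geodesity criterion of Section~\ref{sec_optic}---namely that $\Bell$ is tangent to a geodetic null congruence if and only if $\kappa_i\equiv L_{(i)(0)}=0$---the problem reduces to showing that $\kappa_i$ vanishes.

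With $\bk=\Bell$ fixed, the Ricci tensor~\eqref{RicciIntro} enters the algebraically special Bianchi equations only through the replacement~\eqref{substitutp}, and in particular it yields equation~\eqref{2.51p}, $\kappa_i\,\omegap=0$, where the boost weight $(-2)$ Ricci component is $\omegap=\eta$. Since $\eta$ is non-vanishing by hypothesis, I would at once conclude $\kappa_i=0$, i.e.\ $\Bell$ is geodetic, which is the assertion.

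The genuine difficulty is not this final step but establishing the alignment $\bk\propto\Bell$ in the first place, which is exactly the content of Proposition~\ref{nonaligned}; once that is available, geodesity follows with no further computation. The only point demanding care is to verify that equation~\eqref{2.51p} is the correct specialization of the Bianchi identities to the aligned type~III/N configuration---that the term $\eta k_a k_b$ feeds in solely via the substitution~\eqref{substitutp}---so that no extra source terms can spoil the relation $\kappa_i\,\omegap=0$. Should~\eqref{2.51p} be unavailable, one could instead trace the aligned components (B2$'$)--(B4$'$), that is~\eqref{B1}--\eqref{B3}, in the manner of the proof of Proposition~\ref{nonaligned}, but here this longer route is unnecessary.
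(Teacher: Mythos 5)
Your proposal is correct and follows exactly the paper's own argument: after invoking Proposition~\ref{nonaligned} to set $\bk=\Bell$, the paper likewise reads off $\kappa_i=0$ directly from the contracted Bianchi identity \eqref{2.51p}, $\kappa_i\,\omegap=0$, using $\omegap=\eta\neq0$. No further comment is needed.
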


In the rest of the paper, we focus solely on Weyl type N 
spacetimes with the Ricci tensor of the form \eqref{RicciIntro}.

Recall that $\BS$ and $\BA$ denote the symmetric and skew-symmetric parts of $\Brho$, respectively. Before we proceed any further, we give the following characterization 
of the non-twisting spacetimes of the  Weyl N with the Ricci tensor of the form \eqref{RicciIntro}.

\begin{lemma}\label{lemma1} 
A  spacetime of the Weyl type N  
with the Ricci tensor of the form \eqref{RicciIntro} 
 is non-twisting if and only if its $\Bomegap$ and $\BS$ commute.
\end{lemma}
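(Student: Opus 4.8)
The plan is to reduce the whole statement to the single matrix identity $[\Bomegap,\BS]=-\eta\,\BA$, from which both implications follow at once: since $\eta\neq 0$, the skew part $\BA$ vanishes (i.e. the spacetime is non-twisting) if and only if $\Bomegap$ and $\BS$ commute. By Propositions~\ref{nonaligned} and \ref{geodeticANDs} the multiple WAND $\bl$ is aligned and geodetic, so I may work with the aligned Bianchi equations \eqref{B1}, \eqref{B2} and \eqref{B3} and set $\kappa_i=0$.

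First I would extract from the totally antisymmetric equation \eqref{B3} a relation for the twist alone. Expanding the antisymmetrisation over the three indices and using that only the skew part of $\rho_{ij}$ survives, \eqref{B3} becomes
\begin{equation}
\Omegap_{ij}A_{kl}+\Omegap_{ik}A_{lj}+\Omegap_{il}A_{jk}+\omegat\left(\delta_{ij}A_{kl}+\delta_{ik}A_{lj}+\delta_{il}A_{jk}\right)=0 .
\end{equation}
Contracting the pair $i,j$ and using $\Omegap_{ii}=0$ together with $A_{kl}=-A_{lk}$ collapses this to the matrix equation
\begin{equation}\label{anticomm-plan}
\Bomegap\BA+\BA\Bomegap=\omegat(d-4)\,\BA .
\end{equation}

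The key step is then to exploit the symmetry of \eqref{B1}. Its left-hand side $\tho\Omegap_{ij}+\tho\omegat\,\delta_{ij}$ is manifestly symmetric in $i,j$, because $\Bomegap$ and $\delta_{ij}$ are symmetric and the GHP operator $\tho$ maps symmetric frame tensors to symmetric ones; hence the antisymmetric part of its right-hand side $-\Omegap_{ik}\rho_{kj}-\omegat\rho_{ij}$ must vanish. Writing $\rho_{ij}=S_{ij}+A_{ij}$ and taking the skew part in $i,j$ gives
\begin{equation}
[\Bomegap,\BS]+\{\Bomegap,\BA\}+2\omegat\,\BA=0 .
\end{equation}
Substituting \eqref{anticomm-plan} for the anticommutator and recalling $\omegat(d-2)=\eta$ yields exactly $[\Bomegap,\BS]=-\eta\,\BA$, which proves both directions since $\eta\neq0$.

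The main obstacle, and the point I would flag, is that the commutator $[\Bomegap,\BS]$ cannot be produced by contracting \eqref{B2} or \eqref{B3}: because those are symmetric-tensor identities, every index contraction returns only \emph{anticommutators} such as \eqref{anticomm-plan}, together with the twist relation, and never $\Bomegap\BS-\BS\Bomegap$. The essential observation is therefore that the commutator is forced instead by the built-in symmetry of the \emph{derivative} equation \eqref{B1}; the only real care needed is the bookkeeping of the antisymmetric parts (in particular checking that $\tho$ preserves symmetry) and the use of $\eta\neq0$, which is precisely where genuine Ricci type N enters.
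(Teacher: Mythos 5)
Your proof is correct and follows essentially the same route as the paper: the contraction of \eqref{B3} yielding the anticommutator relation \eqref{4}, the skew part of \eqref{B1}, and the substitution giving $\omegap\BA=-[\Bomegap,\BS]$ with $\omegap=\eta\neq 0$. The only differences are cosmetic (which index pair you contract in \eqref{B3}, and your explicit check that $\tho$ preserves symmetry).
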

\begin{proof}
By taking the trace of both sides of  equation \eqref{B3} in $i$ and $k$, we obtain the matrix relation
\begin{equation}\label{4}
\Bomegap \BA + \BA \Bomegap = (d-4) \omegat \BA.
\end{equation}
The skew-symmetric part of  equation \eqref{B1} reads 
\begin{equation}
0 = [\Bomegap , \BS ] + \Bomegap \BA + \BA \Bomegap + 2\omegat \BA.
\end{equation}
Using \eqref{4}, this equation further simplifies to 
\begin{equation}\label{7}
\omegap \BA = - [\Bomegap, \BS ].
\end{equation}
Since $\omegap$ is non-vanishing, this proves our assertion.
\end{proof}

We see that, in contrast to the Einstein spacetimes of Weyl type N, where $\Bomegap$ and $\BS$ always commute (see \cite{ghpclanek}), in spacetimes of  Weyl  
N with the Ricci tensor of the form \eqref{RicciIntro}, $\Bomegap$ and $\BS$ can be simultaneously diagonalized by Lorentz spins only in the non-twisting case. Thus, it is convenient to treat the twisting and the non-twisting case separately.

%----------------------------------------------------------------------------------------------------------------------------------
\section{Twist-free spacetimes}
\label{sec_notwist}
%----------------------------------------------------------------------------------------------------------------------------------
In this section, we consider spacetimes with a non-twisting multiple WAND. In \cite{clanek11}, the canonical form of the optical matrix corresponding to a non-twisting WAND was obtained for algebraically special Einstein spacetimes of dimension $d\geq5$
(for results on five dimensions, see also \cite{Ortaggioetal12} and for type III and N, see \cite{bianchiclanek}). It turned out that for these spacetimes, the optical matrix has at least one double eigenvalue, unless they are conformally flat. However, we observe that for non-twisting spacetimes of Weyl %and {\bf traceless}  Ricci 
type N and with the Ricci tensor of the form \eqref{RicciIntro}, this is not the case. 

\begin{propos} \label{twistfree}
Let $\Bell$ be a multiple WAND of a twist-free spacetime of the Weyl  
type N with the Ricci tensor of the form \eqref{RicciIntro}. Then the canonical form of the corresponding optical matrix is 
\begin{equation}\label{canon-twistfree}
\boldsymbol{\rho} = s
\left(
\begin{array}{cc|c}
1 & 0 & \multirow{2}{*}{$\boldsymbol{0}$} \\
0 & b    \\ \hline
 \multicolumn{2}{c|}{\boldsymbol{0}} & \boldsymbol{0} \\
\end{array}
\right).
\end{equation}
 where $ b\not=1$, otherwise the spacetime is Einstein,\footnote{
It follows from
\eqref{S3} that $b=1$ iff $\omegat=0=R_{11}$ and thus an Einstein 
spacetime corresponds to $b=1$.} for $s=0$, it is Kundt.
\end{propos}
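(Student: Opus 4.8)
The plan is to exploit that in the non-twisting case $\Brho$ reduces to its symmetric part $\BS$, which in turn commutes with $\Bomegap$. Indeed, $\BA=\boldsymbol 0$ together with equation \eqref{7} gives $[\Bomegap,\BS]=\boldsymbol 0$, so the two real symmetric matrices $\Bomegap$ and $\Brho$ can be simultaneously diagonalised by a Lorentz spin \eqref{spin}. After such a spin I would write $\Omegap_{ij}=\Omegap_{(i)}\delta_{ij}$ and $\rho_{ij}=\rho_{(i)}\delta_{ij}$ with real eigenvalues. The whole statement then reduces to showing that \emph{at most two} of the $\rho_{(i)}$ can be nonzero: once this is established, reordering the spatial frame by a further spin puts the (at most two) nonvanishing entries into the upper-left $2\times2$ block, and calling the first $s$ and the second $sb$ yields \eqref{canon-twistfree}. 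Note that for a symmetric $\Brho$ the Bianchi equation \eqref{B3} is identically satisfied (both $\Omegap_{i[j}\rho_{kl]}$ and $\delta_{i[j}\rho_{kl]}$ vanish by symmetry of $\rho$), so the constraint must come entirely from \eqref{B2}.

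The key step is to evaluate \eqref{B2} in this common eigenbasis. I would specialise the free indices to $i=k$, $j=l$ with $i\neq j$; every off-diagonal entry of $\Bomegap$ and $\Brho$ then drops out and the equation collapses to the scalar relation
\[
(\Omegap_{(i)}+\omegat)\,\rho_{(j)} + (\Omegap_{(j)}+\omegat)\,\rho_{(i)} = 0, \qquad i\neq j .
\]
One should also confirm that the remaining index choices in \eqref{B2} (e.g. $j=k$, or all indices distinct) reproduce either this same relation or a trivial identity, so that no stronger constraint is missed. Abbreviating $\mu_{(i)}\equiv\Omegap_{(i)}+\omegat$, the content of \eqref{B2} is precisely $\mu_{(i)}\rho_{(j)}=-\mu_{(j)}\rho_{(i)}$ for all $i\neq j$.

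The rank bound then follows by a short algebraic argument. Suppose three eigenvalues $\rho_{(a)},\rho_{(b)},\rho_{(c)}$ were simultaneously nonzero. Dividing the three pairwise relations gives $\mu_{(a)}/\rho_{(a)}=-\mu_{(b)}/\rho_{(b)}=-\mu_{(c)}/\rho_{(c)}$ together with $\mu_{(b)}/\rho_{(b)}=-\mu_{(c)}/\rho_{(c)}$, whose only solution is $\mu_{(a)}=\mu_{(b)}=\mu_{(c)}=0$; feeding this back into $\mu_{(i)}\rho_{(a)}=-\mu_{(a)}\rho_{(i)}=0$ forces $\mu_{(i)}=0$ for every $i$ (since $\rho_{(a)}\neq0$). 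But $\Bomegap$ is traceless, so $\sum_i\mu_{(i)}=(d-2)\omegat$, and $\mu_{(i)}\equiv0$ would give $\omegat=\eta/(d-2)=0$, contradicting $\eta\neq0$. Hence at most two $\rho_{(i)}$ are nonzero, which is exactly \eqref{canon-twistfree}. Finally, $s=0$ means $\Brho=\boldsymbol 0$, i.e.\ the spacetime is Kundt, and the identification of $b=1$ with the Einstein case follows from the trace of the Sachs equation, as recorded in the footnote to the proposition.

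The main obstacle I anticipate is organisational rather than conceptual: carrying out the antisymmetrised contraction in \eqref{B2} correctly in the eigenbasis and verifying that the single specialisation $i=k,\ j=l,\ i\neq j$ already exhausts its content, so that nothing stronger has been overlooked. Once that relation is secured, the rank-two conclusion and the contradiction driven by $\eta\neq0$ together with the tracelessness of $\Bomegap$ are immediate.
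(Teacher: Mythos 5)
Your argument is correct, and it reaches the canonical form by a leaner route than the paper's. The engine is the same in both cases --- equation \eqref{B2} evaluated in the common eigenbasis of $\Bomegap$ and $\BS$, whose entire content is indeed the single relation $(\Omegap_{(i)}+\omegat)\rho_{(j)}+(\Omegap_{(j)}+\omegat)\rho_{(i)}=0$ for $i\neq j$ (this is precisely the paper's \eqref{11}), while \eqref{B3} is vacuous for symmetric $\Brho$, as you say. The difference is architectural: the paper first takes the trace of \eqref{B2} to obtain \eqref{6*}, must then dispose of the degenerate denominator $2\Omegap_i=(d-4)\omegat$ by invoking \eqref{B1} together with \eqref{2.50p}, and finally runs a four-way case split on the number of indices with $\Omegap_i\neq-\omegat$; your three-nonzero-eigenvalue contradiction (the three ratios force $\mu_{(a)}=\mu_{(b)}=\mu_{(c)}=0$, hence all $\mu_{(i)}=0$, hence $0=\sum_i\mu_{(i)}=\Tr\Bomegap+(d-2)\omegat=(d-2)\omegat$) gets the rank bound in one stroke from \eqref{11} plus tracelessness, with no auxiliary Bianchi equation needed. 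Your summed version of \eqref{11} over $j\neq i$ in fact reproduces \eqref{6*}, so nothing is lost. What the paper's longer route buys is the explicit formula \eqref{S3} for $S_3/S_2$ in terms of $\Omegap_2$ and $\omegat$, which is what the footnote ($b=1$ iff Einstein) and later the twisting result \eqref{ab} actually cite; your proof does not produce it. Relatedly, attributing the $b\neq1$ claim to ``the trace of the Sachs equation'' is a slip --- \eqref{fullsachs} is irrelevant there --- but the claim follows in one line from your own key relation: if $\rho_{(2)}=\rho_{(3)}\neq0$ and all other eigenvalues vanish, then $\mu_{(2)}=-\mu_{(3)}$ and $\mu_{(i)}=0$ for $i>3$, so $0=\sum_i\mu_{(i)}=(d-2)\omegat$, contradicting $\eta\neq0$.
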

Note that $\boldsymbol{\rho}$ \eqref{canon-twistfree} is a normal matrix, not satisfying 
the optical constraint.

\begin{proof}
Let $d > 4$. 
From lemma \ref{lemma1} we know that $\Bomegap$ and $\BS$ commute. Thus, the null frame 
$\{ \Be_{(a)} \}$ can be transformed by Lorentz spin \eqref{spin} to a null frame $\{ \hat \Be_{(a)}  \}$ such that the matrices 
$\hat{\Bomega}^{\prime}$, $\hat \BS$  of the null frame components in  $\{ \hat \Be_{(a)}  \}$
are simultaneously diagonal. For the sake of clarity, let us denote $\hat{\Bomega}^{\prime} $ and $\hat \BS$ again by $\Bomegap$ and $\BS$, respectively. 
Thus $\Bomegap$ and $\BS$ take the form $\Bomegap = \text{diag}(\Omegap_{2},\dots,\Omegap_{d-1})$ and 
$\BS = \text{diag}(S_2,\dots,S_{d-1})$.
 Taking the trace of both sides of \eqref{B2} in $i$ and $k$, we obtain 
\begin{equation}\label{5}
%\Tr \Brho
\rho 
\left(\Bomegap + \omegat\BI\right) = \Bomegap \Brho + \Brho \Bomegap - (d-4) \omegat \Brho.
\end{equation}
Decomposing $\Brho$ into its symmetric and skew-symmetric part, $\Brho = \BS + \BA$, and using \eqref{4} one obtains 
\begin{equation}\label{6}
%\Tr \Brho 
\rho \left(\Bomegap + \omegat\BI \right) = \Bomegap \BS + \BS \Bomegap - (d-4) \omegat \BS.
\end{equation}
Hence from \eqref{6} we obtain  the following relation between $S_i$ and $\Omegap_i$:
\begin{equation}\label{6*}
S_{i} \left (2 \Omegap_i - (d-4)\omegat \right ) = \rho  
(\Omegap_i + \omegat).
\end{equation}
Firstly, we will prove that $2 \Omegap_{i} \neq (d-4)\omegat$ for all $i$, unless $\Brho = \boldsymbol{0}$. 
Now, assume that there exists $i$ such that $2 \Omegap_{i} = (d-4)\omegat$. Then, using 
 \eqref{2.50p},  
we obtain
\begin{equation}
\tho \Omegap_i = -\frac{d-4}{2} \omegat \rho. 
\end{equation}
However, equation \eqref{6*} implies that $ \rho=0 $ 
and thus both $\tho \Omegap_i$ and $\tho \omegat$ vanish. 
Hence the Bianchi equation \eqref{B1} reduces to (no summation)
\begin{equation}
\Omegap_{(i)} S_{(i)} + \omegat S_i = 0,
\end{equation}
which, using the relation for $\Omegap_i$, immediately implies that $S_i = 0$ for such $i$. 
Now, considering equation \eqref{B2} for the choice of indices $i = k$, $i \neq j$, $i \neq l$ and $j = l$, one has 
\begin{equation}\label{11}
(\Omegap_{k}+ \omegat) S_{j} = - (\Omegap_j + \omegat) S_k  
\end{equation} 
for all $j \neq i$. Thus for $k=i$, one obtains that $(\Omegap_i + \omegat) S_j = 0$ for all $j\neq i$. This implies that 
$S_j = 0$ for all $j\neq i$ and consequently $\Brho = \boldsymbol{0}$, i.e. the spacetime belongs to the Kundt class.

Let us discuss the case $2 \Omegap_{i} \neq (d-4)\omegat$ for all $i$. In this case, it is possible to express $S_{i}$ using relation \eqref{6*} as 
\begin{equation}
S_{i}  = \rho 
\frac{\Omegap_i + \omegat}{2\Omegap_i - (d-4)\omegat}.\label{Si}
\end{equation}
Now, one can distinguish between four possible scenarios which lead to three 
qualitatively different canonical forms of $\Brho$. Namely, these are:  
\begin{enumerate}[label=(\roman*)]
\item\label{jednax} $\Omegap_i = -\omegat$ for all $i$;
\item \label{dvax} There is exactly one $i$ such that $\Omegap_i \neq -\omegat$;
\item \label{trix} There are exactly two $i$ such that $\Omegap_i \neq - \omegat$;
\item \label{ctyrix} There are at least three $i$ such that $\Omegap_i \neq -\omegat$;
\end{enumerate} 
One immediately obtains that  scenario \ref{jednax} leads to $\Omegap_i =0 =\omegat$
thanks to $\Omega'_{ii}=0$ and this contradicts our assumption of non-vanishing
$\omegat$.

Scenario \ref{dvax} 
leads to one possible non-vanishing component $S_i$ of $\BS$ and since $S_j = 0$ for all $j \neq i$, we obtain that 
necessarily $S_i = \rho$  
(with $\Omega'_i=(d-3)\omegat$). 

The third scenario leads to the possibility of two non-vanishing components, say $S_2$ and $S_3$, of $\BS$. In addition, one can use expression \eqref{Si} to verify that $S_2 \neq S_3$, otherwise 
$\Brho = \boldsymbol{0}$. In fact, from \eqref{Si} (also using $\Omega'_{ii}=0$) one can  determine that 
\be
S_3=S_2\frac{\Omega'_2-(d-3)\omegat}{\Omega'_2+\omegat}.\label{S3}
\ee

It remains to show that scenario \ref{ctyrix} 
leads to the vanishing optical matrix $\Brho$. 
Denote $I$ the set of all indices $i$ such that $\Omegap_i \neq - \omegat$. 
Then, using \eqref{Si}, equation \eqref{11} implies that 
\begin{equation}
\rho %\Tr \Brho 
\left[ 2\Omegap_i -(d-4)\omegat \right] = -\rho 
\left[ 2\Omegap_j -(d-4)\omegat \right] 
\end{equation}
for all $i,j \in I,\ i\neq j$. Since $\rho=0$  
leads to $\Brho = \boldsymbol{0}$, assume that $\rho $  
is non-vanishing. 
Then 
\begin{equation}
\Omegap_i + \Omegap_j = (d-4)\omegat
\end{equation}
for all $i,j \in I,\ i \neq j$. This immediately implies that $2\Omegap_i = (d-4)\omegat$ for all $i \in I$, which contradicts our assumption. Thus $\rho=0$  
and consequently $\Brho = \boldsymbol{0}$. This completes the proof for the case $d>4$. The proof of the case $d=4$ is similar and one obtains that $S_3=S_2 (\Omega'_2 -\omegat)/(\Omega'_2 +\omegat)$ or $\rho_{ij}=0$.
\end{proof}

Also note that for spacetimes with $\Bomegap=\boldsymbol{0}$ and non-vanishing $\omegat$, 
equation \eqref{6} implies that $\rho=0$  
and Bianchi equation \eqref{B1} then gives  $\Brho = \boldsymbol{0}$. 
Thus we conclude with the following remark.
\begin{remark}\label{cflat}
A  non-twisting conformally flat spacetime  
with the Ricci tensor of the form \eqref{RicciIntro} is necessarily a Kundt spacetime 
{(see  \cite{HDVSI} for a discussion of such spacetimes).} 
\end{remark}

%----------------------------------------------------------------------------------------------------------------------------------
\section{Twisting spacetimes}
\label{sec_twist}
%----------------------------------------------------------------------------------------------------------------------------------
In this section, we consider spacetimes with a twisting multiple WAND. 
In \cite{ricciclanek}, it was proven that in spacetimes of odd dimensions, a twisting geodetic WAND must be also shearing. 
We show that for spacetimes of  
Weyl type N with the Ricci tensor of the form \eqref{RicciIntro}, this result extends to  
{even} dimensions. 
\begin{propos} 
Twisting, shear-free spacetimes of Weyl 
 type N with the Ricci tensor of the form \eqref{RicciIntro} do not exist. 
\end{propos}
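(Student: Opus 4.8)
The plan is to obtain the statement as an essentially immediate corollary of Lemma \ref{lemma1}. First I would recall that by Proposition \ref{geodeticANDs} the multiple WAND $\Bell$ is geodetic, so that the optical matrix $\rho_{ij}$ and its decomposition $\Brho = \BS + \BA$ into a symmetric part $\BS$ and a skew-symmetric part $\BA$ are well defined, and the twisting hypothesis reads $\BA \neq \b0$. The one genuinely new step is to rewrite the shear-free condition algebraically: since $\sigma_{ij} \equiv S_{ij} - \theta\delta_{ij}$ is the traceless part of $\BS$, vanishing shear $\sigma_{ij}=0$ forces $\BS = \theta\BI$, i.e. $\BS$ is a scalar multiple of the identity.

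Given this, I would close the argument directly. A scalar multiple of the identity commutes with every matrix, so in particular $[\Bomegap, \BS] = \b0$. Substituting this into equation \eqref{7}, derived in the proof of Lemma \ref{lemma1}, yields $\omegap \BA = -[\Bomegap, \BS] = \b0$, and since $\omegap = \eta$ is non-vanishing by assumption we obtain $\BA = \b0$, so the congruence is non-twisting, contradicting the twisting hypothesis. Equivalently, one may invoke Lemma \ref{lemma1} verbatim: a shear-free $\BS$ commutes with $\Bomegap$, which is exactly the necessary and sufficient criterion established there for the spacetime to be non-twisting.

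I expect no real obstacle to remain at this point. The entire analytic burden, namely packaging the relevant null-frame projections of the Bianchi identity into the matrix identity \eqref{7}, has already been discharged in the proof of Lemma \ref{lemma1}; the only additional content needed here is the elementary observation that the shear-free condition makes $\BS$ pure trace, after which the non-vanishing of $\eta$ closes the argument. I would also stress that the reasoning is entirely dimension-independent, so it applies in every $d$ and in particular covers the even-dimensional case, thereby extending the odd-dimensional non-existence result of \cite{ricciclanek} exactly as claimed in the preceding discussion.
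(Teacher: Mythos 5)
Your proof is correct, and it takes a genuinely different and considerably more economical route than the paper. You observe that the shear-free condition $\sigma_{ij}=0$ is equivalent to $\BS=\theta\BI$, which trivially commutes with $\Bomegap$, so that \eqref{7} (equivalently, Lemma \ref{lemma1}) immediately forces $\omegap\BA=\b0$ and hence $\BA=\b0$, contradicting the twisting hypothesis; since \eqref{4} and the skew part of \eqref{B1} hold for all $d\ge 4$ (in $d=4$ both sides of \eqref{4} vanish identically), the argument is indeed uniform in the dimension. The paper instead works in a frame diagonalizing $\BS$ and derives the component relations \eqref{14} and \eqref{16}, concluding that $0<\mathrm{rank}\,\BS<d-2$ and hence $\sigma^2\neq0$. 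What the longer route buys is not the non-existence statement itself but the by-products: \eqref{14} (at most one non-vanishing $S_i$ off the $\Omegap_{jl}$-block) and \eqref{16} (the explicit expression of $A_{ij}$ in terms of the off-diagonal $\Omegap_{ij}$) are exactly the ingredients reused in the proof of Proposition \ref{twistingmatrix} to pin down the canonical form \eqref{generalform}. So your argument is a clean, self-contained replacement for this proposition, but one would still need the paper's component analysis for the subsequent classification of the twisting optical matrix.
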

\begin{proof}
Let $d>4$ and let $\{ \hat \Be_{(a)}\}$ be the null frame, in which $\BS$ is diagonal, and let us 
employ the notation used in the proof of proposition \ref{twistfree}. 
Consider  equations \eqref{B2} and \eqref{B3} for the choice of indices $i=k$,  $i\neq j$, $i \neq l$ and $j\neq l$ to obtain 
\begin{equation}\label{12}
\left( \Omegap_i + \omegat \right) A_{jl} + \Omegap_{l(i)} A_{(i)j} - \Omegap_{j(i)} A_{(i)l} = - \Omegap_{jl} S_{i}
\end {equation}
and 
\begin{equation}\label{13}
\Omegap_{(i)j} A_{(i)l} + \Omegap_{(i)l} A_{j(i)} - \left( \Omegap_i + \omegat \right) A_{jl} = 0.
\end {equation}
Combining \eqref{12} and \eqref{13}, one immediately obtains that
\begin{equation}\label{14}
S_{i} \Omegap_{jl} = 0
\end{equation}
for all $i \neq j$, $i \neq l$, $j \neq l$.
Rewriting the equation \eqref{6} as 
\begin{equation}\label{6**}
\rho % \Tr \Brho 
\left(\Bomegap + \omegat\BI \right) = [\Bomegap, \BS ] + 2 \BS \Bomegap - (d-4) \omegat \BS
\end{equation}
and using \eqref{7}, we obtain 
\begin{equation}\label{15}
\rho %\Tr \Brho 
\left(\Bomegap + \omegat \BI \right) =  2 \BS \Bomegap - \omegap \BA- (d-4) \omegat \BS.
\end{equation}
Then, considering \eqref{15} for the choice of the indices $i \neq j$, one obtains the relation for the components of $\BA$:
\begin{equation}\label{16}
A_{ij} = \frac{2 S_{(i)} - \rho 
}{\omegap} \Omegap_{(i)j}.
\end{equation}
Clearly, for spacetimes to be twisting, necessarily $\BS \neq \boldsymbol{0}$ and thus $\text{rank } \BS > 0$. At the same time, there has to be at least one non-vanishing off-diagonal component of $\Bomegap$. Hence from \eqref{14} we have that $\text{rank } \BS < d-2$. Therefore, $0 < \text{rank } \BS < d-2$ and the shear scalar $\sigma^2$ is necessarily non-vanishing, 
which proves the assertion for $d>4$. Using \eqref{16}, one immediately obtains the result also for $d=4$, i.e. $S_2$ and  $S_3$ have to be distinct for $A_{23}$
to be non-vanishing. 
\end{proof}
From \eqref{16}, we see that, in the frame in which $\BS$ takes the diagonal form, the number of non-vanishing components of $\BA$ is less than or equal to the number of non-vanishing off-diagonal components of $\Bomegap$. 
Thus, we conclude: 
\begin{remark}\label{cflattwist}
Twisting, conformally flat spacetimes 
with the Ricci tensor of the form \eqref{RicciIntro} do not exist.  
\end{remark}
\noindent 
Combining remarks \ref{cflat} and \ref{cflattwist}, we obtain that the following statement holds in arbitrary dimension.
\begin{corol}\label{c-flat}
A conformally flat 
spacetime with the Ricci tensor of the form \eqref{RicciIntro} is Kundt.
\end{corol}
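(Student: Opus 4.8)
The plan is to obtain the corollary as a direct consequence of the two preceding remarks, organised by the twist dichotomy, after a preliminary check that the optical data are well defined in the conformally flat (Weyl type O) setting. In a conformally flat spacetime the Weyl tensor vanishes, so $\Bomegap = \boldsymbol{0}$, and the only geometrically distinguished null direction is the direction $\bk$ picked out by the Ricci tensor \eqref{RicciIntro}; I would align the frame so that $\Bell = \bk$, whence the aligned Bianchi equations \eqref{B1}--\eqref{B3}, \eqref{2.50p}--\eqref{2.51p} hold with $\Bomegap = \boldsymbol{0}$ and $\omegap = \eta \neq 0$. As in proposition \ref{geodeticANDs}, equation \eqref{2.51p} then forces $\kappa_i = 0$, so $\Bell$ is geodetic and the optical matrix $\rho_{ij}$, together with its splitting into the twist $\BA$ and the symmetric part $\BS$, is well defined.

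With the optical matrix in hand, I would split into the two exhaustive cases. If the spacetime is twisting, remark \ref{cflattwist} rules this out: there are no twisting conformally flat spacetimes with Ricci tensor \eqref{RicciIntro}. Hence the spacetime must be non-twisting, and remark \ref{cflat} then applies to give $\boldsymbol{\rho} = \boldsymbol{0}$, i.e. the spacetime is Kundt. Combining the two remarks thus covers every possibility, and since each is valid for all $d \geq 4$, the conclusion holds in arbitrary dimension.

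The only delicate point --- the main, though minor, obstacle --- is that propositions \ref{nonaligned} and \ref{geodeticANDs} were stated for \emph{genuine} Weyl type III and N, whereas here the spacetime is of type O, so those statements cannot be cited verbatim. I expect this to cause no real difficulty: the geodeticity of $\Bell$ used above rests only on $\omegap = \eta \neq 0$ through \eqref{2.51p} and never invokes a nonvanishing Weyl component, and once geodeticity is secured the matrix computations behind remarks \ref{cflat} and \ref{cflattwist} --- which merely set $\Bomegap = \boldsymbol{0}$ in \eqref{6} and \eqref{16} --- go through unchanged.
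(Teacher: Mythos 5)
Your proof is correct and follows the paper's own route exactly: the corollary is obtained by combining remark \ref{cflat} (non-twisting case) with remark \ref{cflattwist} (twisting case excluded). Your additional preliminary check that geodeticity of $\Bell$ rests only on \eqref{2.51p} with $\omegap\neq 0$, and not on a nonvanishing Weyl component, is a valid and worthwhile clarification that the paper leaves implicit.
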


In the following proposition, we show an allowed canonical form of the optical matrix for a twisting multiple WAND. 

\begin{propos}\label{twistingmatrix}
Let $\Bell$ be a multiple WAND of a twisting spacetime of  Weyl type N 
with the Ricci tensor of the form \eqref{RicciIntro}. 
Then the canonical form of the corresponding  optical matrix is 
\begin{equation} \label{generalform}
\boldsymbol{\rho} = s
\left(
\begin{array}{cc|c}
1 & a & \multirow{2}{*}{$\boldsymbol{0}$} \\
-a & b    \\ \hline
 \multicolumn{2}{c|}{\boldsymbol{0}} & \boldsymbol{0} \\
\end{array}
\right),
\end{equation}
where $b\not= 1$ (or otherwise the spacetime is  
Einstein)\footnote{
As in the non-twisting case, it follows from
\eqref{S3} or \eqref{ab} that $b=1$ iff $\omegat=0=R_{11}$ and thus an 
Einstein
spacetime corresponds to $b=1$.}
and $a$ and $b$ satisfy
\begin{equation}
a= \frac{ \Omegap_{23}}{\Omegap_{2} +\omegat}, \qquad
b= \frac{\Omegap_2 -(d-3) \omegat}{\Omegap_{2} + \omegat}=
{1-\frac{\omegap}{\Omegap_{2} + \omegat}}.\label{ab}
\end{equation} 
For $s=0$, it is Kundt.

The Weyl tensor frame components are
\begin{equation} \label{generalformWeyl}
\Omega'_{ij} = 
\left(
\begin{array}{cc|c}
\Omega'_2 & \Omega'_{23} & \multirow{2}{*}{$\boldsymbol{0}$} \\
\Omega'_{23} & (d-4)\omegat -\Omega'_2   \\ \hline
 \multicolumn{2}{c|}{\boldsymbol{0}} & -\omegat \BI \\ 
\end{array}
\right).
\end{equation}
\end{propos}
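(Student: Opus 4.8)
The plan is to work entirely in the spin frame $\{\hat\Be_{(a)}\}$ in which $\BS$ is diagonal, $\BS=\operatorname{diag}(S_2,\dots,S_{d-1})$ (as in the proof of Proposition \ref{twistfree} and the preceding shear-free result), and to reuse the relations already available there: the diagonal part of \eqref{6}, namely $S_i\bigl(2\Omega'_i-(d-4)\omegat\bigr)=\rho(\Omega'_i+\omegat)$; equation \eqref{14}, $S_i\Omega'_{jl}=0$ for distinct $i,j,l$; and equation \eqref{7}, $\omegap\BA=-[\Bomegap,\BS]$. First I would localize the support of $\BS$ and of the off-diagonal part of $\Bomegap$ to a single $2\times2$ block. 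Since the spacetime is twisting, $\BA\neq\boldsymbol{0}$, so \eqref{7} gives $[\Bomegap,\BS]\neq\boldsymbol{0}$; in the diagonal frame $[\Bomegap,\BS]_{ij}=(S_j-S_i)\Omega'_{ij}$, so there is a pair of indices, which I relabel $2,3$, with $\Omega'_{23}\neq0$ and $S_2\neq S_3$. Feeding $\Omega'_{23}\neq0$ into \eqref{14} forces $S_k=0$ for every $k\neq2,3$, whence $\operatorname{rank}\BS\le2$ and, since $S_2\neq S_3$, already $b\neq1$. The same equation removes every off-diagonal $\Omega'_{pq}$ not supported on $\{2,3\}$ when $\operatorname{rank}\BS=2$; if $\operatorname{rank}\BS=1$, a residual $SO(d-3)$ spin \eqref{spin} in the kernel of $\BS$ rotates the surviving off-diagonal entries of $\Bomegap$ so that only $\Omega'_{23}$ remains. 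By \eqref{7} this makes $A_{23}$ the only nonzero component of $\BA$, so $\Brho$ collapses to $\left(\begin{smallmatrix}S_2 & A_{23}\\ -A_{23} & S_3\end{smallmatrix}\right)\oplus\boldsymbol{0}$, i.e. the shape \eqref{generalform} with $s=S_2$, $a=A_{23}/S_2$ and $b=S_3/S_2$.

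The step I expect to be the main obstacle is fixing the lower-right $(d-4)\times(d-4)$ block of $\Bomegap$. I would deliberately avoid \eqref{Si}, whose denominator argument presupposes $\rho\neq0$ (the value $\rho=0$, i.e. $b=-1$, is not excluded a priori), and instead evaluate the untraced Bianchi identity \eqref{B3} on the index triple $(i;jkl)=(l;l,2,3)$ for each $l\ge4$. Because $\Omega'_{23}$ is the only off-diagonal Weyl component and $A_{23}$ the only off-diagonal of $\Brho$, the antisymmetrisation should collapse to $A_{23}(\Omega'_l+\omegat)=0$; as $A_{23}\neq0$ in the twisting case this yields $\Omega'_l=-\omegat$ for all $l\ge4$, which is the $-\omegat\BI$ block of \eqref{generalformWeyl}. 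Tracelessness of $\Bomegap$ then fixes $\Omega'_3=(d-4)\omegat-\Omega'_2$, completing \eqref{generalformWeyl}.

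Finally I would extract $a$ and $b$. Dividing the diagonal relations of \eqref{6} for $i=2$ and $i=3$, and substituting $\Omega'_3=(d-4)\omegat-\Omega'_2$, should give $b=S_3/S_2=(\Omega'_2-(d-3)\omegat)/(\Omega'_2+\omegat)$; reading $A_{23}$ off the $(2,3)$ entry of \eqref{7}, $A_{23}=(S_2-S_3)\Omega'_{23}/\omegap$, and using $1-b=\omegap/(\Omega'_2+\omegat)$ together with $\omegap=(d-2)\omegat$, should give $a=A_{23}/S_2=\Omega'_{23}/(\Omega'_2+\omegat)$; these are the relations \eqref{ab}, and one checks they remain valid at $b=-1$ (equivalently $\rho=0$, which must then be treated directly since the division by $\rho$ is illegitimate). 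To close I would note that $b=1$ is equivalent to $\omegat=0$, i.e. the Einstein case excluded by $\eta\neq0$; that $s=S_2=0$ forces $\BS=\boldsymbol{0}$ and hence, again via \eqref{7}, $\BA=\boldsymbol{0}$, so the spacetime is Kundt; and that for $d=4$ there are no indices $l\ge4$, so the argument terminates after $\Omega'_3=-\Omega'_2$ and produces the same $a,b$.
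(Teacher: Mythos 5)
Your proposal follows essentially the same algebraic skeleton as the paper's proof: diagonalize $\BS$, use \eqref{14} to localize the off-diagonal part of $\Bomegap$ and the support of $\BS$ to a single $2\times2$ block, transfer this to $\BA$ via \eqref{7}, and extract $a$, $b$ and the lower-right block $-\omegat\BI$ from the diagonal of \eqref{6}, tracelessness of $\Bomegap$, and a suitable component of the algebraic Bianchi identities (the paper uses \eqref{11}; your $(i;jkl)=(l;l,2,3)$ component of \eqref{B3} indeed collapses to $(\Omega'_l+\omegat)A_{23}=0$ and gives the same $\Omega'_l=-\omegat$). The computations you outline for \eqref{ab} and \eqref{generalformWeyl} are correct, as is the observation that $S_2\neq S_3$ already forces $b\neq1$.

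The genuine divergence is your treatment of $\operatorname{rank}\BS=1$ (say $S_3=0$), which you keep alive and absorb into the canonical form with $b=0$ via a residual $SO(d-3)$ rotation. The paper instead \emph{excludes} this case: with $S_j=0$ for $j>2$ one gets $\Omega'_j=-\omegat$ for $j>2$ and $\Omega'_2=(d-3)\omegat$ from \eqref{11}, whence $\tho\Omega'_{j}+\tho\omegat=0$ by \eqref{2.50p}, and the diagonal $j=i>2$ components of \eqref{B1} then reduce to $\Omega'_{2i}A_{2i}=0$, killing the twist. This step requires the propagation equations \eqref{2.50p} and \eqref{B1} along $\Bell$, which your purely algebraic toolkit (\eqref{6}, \eqref{7}, \eqref{14}, \eqref{B3}) cannot reproduce; consequently you do not obtain that neither $S_2$ nor $S_3$ vanishes, i.e. that $b\neq0$ for twisting $\Bell$ — a fact the paper states immediately after the proposition and uses in table \ref{matrixrho}. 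Since the proposition as literally worded only asserts the shape \eqref{generalform} with $b\neq1$ and the relations \eqref{ab}, your argument does cover the stated claim, but you should either add the \eqref{B1}-based exclusion or note that your rank-one branch is in fact vacuous. Two smaller repairs: your residual rotation keeps the lower-right block of $\Bomegap$ diagonal only because \eqref{14} re-applies in the rotated frame (where $\BS$ is still diagonal) — this should be said explicitly; and the $\rho=0$ (i.e. $b=-1$) branch of the $a,b$ derivation, which you flag but do not carry out, is handled by reading $S_3/S_2$ off the two diagonal components of \eqref{6} without dividing by $\rho$ when $2\Omega'_2\neq(d-4)\omegat$, and by $S_3=-S_2$ directly when $2\Omega'_2=(d-4)\omegat$.
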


{Note that $\boldsymbol{\rho}$ \eqref{generalform} is again a normal matrix,
which does not satisfy the optical constraint.} Note also that relations 
\eqref{generalform}--\eqref{generalformWeyl} hold in the non-twisting case as well, where
we just substitute $a=0=\Omega'_{23}$.

\begin{proof}
We prove again only the case $d>4$. Let $\{ \hat \Be_{(a)}\}$ be the null frame, in which $\BS$ takes a diagonal form and let us  employ  the notation used in the previous sections again. Based on the above discussion, there are two possible scenarios: 
\begin{enumerate}[label=(\roman*)]
\item\label{jedna} There is exactly one non-vanishing $\Omegap_{ij}$ for some $i<j$;
\item \label{dva} There are at least two non-vanishing $\Omegap_{ij}$ for some $i<j$.
\end{enumerate} 
Let us start with the case \ref{dva}. Equation \eqref{14} implies that there is at most one non-vanishing diagonal component $S_i$ of $\BS$ and the only non-vanishing off-diagonal components of $\Bomegap$ can be those that appear in the same row or column as $S_i$. Otherwise, $\BS = \boldsymbol 0$ and thus 
$\BA = \boldsymbol 0$, which is a contradiction. Therefore, let us assume that $S_i$ is non-vanishing. Let us also rearrange $\{ \Bm_{(i)} \}$ so that $\BS = \text{diag}(S_2,0,\dots,0)$. 
Since the rest of the diagonal of $\BS$ vanishes, we have that $S_2 = \rho$,  
and thus from \eqref{14} only the off-diagonal components
 $\Omega'_{2i}$, $i>2$ (and thanks to \eqref{16}, also $A_{2i}$) can be non-vanishing.
From \eqref{11}, we have $S_2(\Omegap_j + \omegat) = 0$ for all $j>2$. Hence $\Omegap_j = -\omegat$ for all $j>2$. From the condition 
$\Tr \Bomegap = 0$, 
we conclude that $\Omegap_2 = (d-3) \omegat$. 
Using these relations for the components $\{ \Omegap_i\}$, from \eqref{2.50p},  
we obtain
\begin{equation}
\tho \Omegap_2 = -(d-3) \omegat \rho, 
\end{equation}
while for $i>2$ one has
\begin{equation}
\tho \Omegap_i =  \omegat \rho.  
\end{equation}
Considering \eqref{B1} for the choice of indices $j=i$ thus gives 
\begin{equation}\label{redukce}
\Omegap_{(i)k} \rho_{k(i)}  = 0.
\end{equation}
Since $\rho_{ki} = 0$ for all $i,k>2$, equation \eqref{redukce} reduces to 
\begin{equation}
\Omegap_{(i)2} A_{2(i)} = 0
\end{equation}
for all $i > 2$. This implies that $\BA = \boldsymbol{0}$, which again contradicts our assumption that the spacetime is twisting. Thus scenario \ref{dva} is excluded. 

Now, let us discuss case \ref{jedna}. According to \eqref{14}, the only possible non-vanishing components $S_i$ of $\BS$ are those which appear in the same row or column as the only non-vanishing  {off-diagonal} 
components $\Omegap_{ij}$ and $\Omegap_{ji}$ of $\Bomegap$. Again, we rearrange the spacelike vectors $\{ \Bm_{(i)} \}$ of the null frame such that 
$\BS = \text{diag}(S_2,S_3,0,\dots,0)$ and the non-vanishing {off-diagonal} components of $\Bomegap$ are $\Omegap_{23}$ and $\Omegap_{32}$.
We have that $S_{2} + S_3  = \rho $  
and, similar to the discussion of scenario \ref{dva}, from \eqref{11} 
we have $\Omegap_j = - \omegat$ for all $j > 3$. There are exactly $d-4$ such components $\Omegap_j$, thus the tracelessness of 
$\Bomegap$ implies that
$\Omegap_2 + \Omegap_3 = (d-4)\omegat $. 
Using these relations, we conclude that neither of $S_2$, $S_3$ vanishes and $S_2 \neq S_3$, otherwise 
\eqref{B1} and \eqref{7}, respectively, again implies  that $\Brho$ is twist-free.

The relations \eqref{ab} for parameters $a$ and $b$ can be derived
using 
\eqref{11}  
 or \eqref{S3}  
 and \eqref{16}.\footnote{Note that thanks to \eqref{Si}, the denominator $\Omegap_{2} + \omegat$ is always non-vanishing for $S_2\not=0$.}
\end{proof}

Putting propositions \ref{twistfree} and \ref{twistingmatrix} together, we observe that in both cases (twisting and the non-twisting), the optical matrix $\Brho$ 
takes the form  \eqref{generalform}, 
with $b\not= 1$. In the case of twisting $\Bell$, all functions  $s$, $a$ and $b$ must be non-vanishing, while in the non-twisting case ($a=0$),  $b$  or  $s$ may vanish.
Observe that for type N and generic type III {\it Einstein} spacetimes, as studied in \cite{bianchiclanek}, the optical matrix takes the form \eqref{generalform} with $b=1$.

%----------------------------------------------------------------------------------------------------------------------------------
%-------------------------------------------------------------

\section{The  Ricci and Bianchi equations
and $r$-dependence }
\label{alleq}

Let us present all Ricci and Bianchi equations and commutators 
\cite{ghpclanek,ricciclanek}
for spacetimes of aligned  Weyl type N 
($\Bomegap \not=0$) 
with the Ricci tensor of the form \eqref{RicciIntro}  
($\omegap=\eta$, $\phi=\lambda$, 
$\phi_{ij}=\lambda\delta_{ij}$). Note that the multiple WAND $\bl$ is geodetic ($\kappa_i=0$). 
For simplicity, we  choose a parallelly propagated frame ($\tau'_i=0$, $\M{i}_{j0} =0$).

\subsection{Derivatives and commutators}

In a parallelly propagated frame, the \emph{GHP derivative operators} $\tho$, $\tho'$, $\eth_i$ \eqref{thorn}--\eqref{eth} act on a GHP scalar $\Tb$ of boost weight $b$ and spin $s$ as  
  \begin{eqnarray}
    \tho T_{i_1 i_2...i_s} &\equiv & D T_{i_1 i_2...i_s}, 
		\\
    \tho' T_{i_1 i_2...i_s} &\equiv & \Delta T_{i_1 i_2...i_s} - b L_{11} T_{i_1 i_2...i_s} 
                                     + \sum_{r=1}^s \M{k}_{i_r 1} T_{i_1...i_{r-1} k i_{r+1}...i_s},\\
    \eth_i T_{j_1 j_2...j_s} &\equiv & \delta_i T_{j_1 j_2...j_s} - b L_{1i} T_{j_1 j_2...j_s} 
                                     + \sum_{r=1}^s \M{k}_{j_r i} T_{j_1...j_{r-1} k j_{r+1}...j_s},
  \end{eqnarray}
where the derivatives along the frame vectors are defined in \eqref{deriv}.

The commutators (C1)--(C3) and (C2') in \cite{ghpclanek} read:
\begin{eqnarray}
[\tho, \tho']T_{i_1...i_s} 
         &=& \left( -\tau_j  \eth_j  
                   - b  \frac{\lambda}{d-1} 
             \right)T_{i_1...i_s} , \label{comm:thotho}\\ [3mm]
[\tho, \eth_i]T_{k_1...k_s}
         &=& -\rho_{ji}\eth_j T_{k_1...k_s} ,
            \label{comm:thoeth}\\[3mm]
[\eth_i,\eth_j]T_{k_1...k_s}
         &=& 2\left(\rho_{[ij]} \tho' + \rho'_{[ij]} \tho 
                   + b \rho_{l[i|} \rho'_{l|j]} \right) T_{k_1...k_s}\nonumber\\
         && + 2\sum_{r=1}^s \Big[\rho_{k_r [i|} \rho'_{l|j]} + \rho'_{k_r [i|} \rho_{l|j]} 
				+\frac{\lambda\delta_{[i|k_r}\delta_{|j]l}}{(d-1)} 
				\Big] 
				T_{k_1...l...k_s},
            \label{comm:etheth}\\[3mm]
						[\tho', \eth_i]T_{k_1...k_s}
         &=& \left[-(\kappa'_i \tho + \tau_i\tho' +\rho'_{ji}\eth_j)
             + b\left(-\tau_j\rho'_{ji} + \kappa'_j\rho_{ji} 
             \right) \right]T_{k_1...k_s} \nonumber\\
         & &+  \sum_{r=1}^s \Big[ \kappa'_{k_r}\rho_{li} - \rho'_{k_r i}\tau_l
            + \tau_{k_r} \rho'_{li} - \rho_{k_r i} \kappa'_l
            \Big] T_{k_1...l...k_s}.
            \label{comm:thoethprime}
\end{eqnarray}

%------------------Bianchi

\subsection{Bianchi equations}

The Bianchi equations (B2'), (B3'), (B4')=(B3')$_{i[jkl]}$ and (B1') in \cite{ghpclanek} 
 using (2.43') read (the remaining Bianchi equations are satisfied identically):
\begin{eqnarray}
 \tho (\Omega'_{ij} +\omegat \delta_{ij})
                 &=&  - (\Omega'_{ik}+\omegat \delta_{ik}) \rho_{kj},
                      \label{Bianchi-2}\\[3mm]
  0            &=&  - (\Omega'_{i[k|}+\omegat \delta_{i[k|}) \rho_{j|l]} 
	+ (\Omega'_{j[k|} +\omegat \delta_{j[k|})\rho_{i|l]},
                     \label{Bianchi-3}\\[3mm]
  0                 &=&  - (\Omega'_{i[j}+\omegat \delta_{i[j}) \rho_{kl]},\label{Bianchi-4}\\
  -  \eth_{[j}(\Omega'_{k]i} +\omegat \delta_{k]i})
                &=&  (\Omega'_{i[j}+\omegat \delta_{i[j}) \tau_{k]},\label{Bianchi-1}
\end{eqnarray}

The contracted Bianchi identity
$  \nabla^a R_{ab} = \pul\nabla_b R$ ((2.50') and (2.51)
in \cite{ghpclanek})  reads
\begin{eqnarray}
	\tho\omega'  &=& -\rho\omega', \label{BianchiMat1p}\\
  0    &=& %-\kappa'_i\omega - 
	\kappa_i\omega'. \label{BianchiMat2}
\end{eqnarray}

\subsection{Ricci equations}

The Ricci equations (i.e. projections of
$v_{a;bc}-v_{a;cb}={R}_{sabc}v^s$  into the frame \eqref{frame}) 
for GHP scalars (equations (NP1)--(NP4) and (NP1')--(NP4')
in \cite{ghpclanek}) read
\begin{eqnarray}  
  \tho \rho_{ij} 
	&=& - \rho_{ik} \rho_{kj}, 
\label{fullsachs}\\
 \tho \tau_i  
	&=& - \rho_{ij}\tau_j, \label{R:thotau}\\[3mm]
    \eth_{[j|} \rho_{i|k]}     &=& \tau_i \rho_{[jk]}, 
																\label{R:ethrho} \\
 \tho' \rho_{ij} - \eth_j \tau_i &=& - \tau_i \tau_j
                                      - \rho_{ik}\rho'_{kj}
																			-\frac{\lambda}{d-1} \delta_{ij}, 	
																			\\
	 \tho \rho'_{ij} 
	&=&      - \rho'_{ik}\rho_{kj} 
																	-\frac{\lambda}{d-1} \delta_{ij},\label{R:thorhop}\\
																		- \tho \kappa'_i 
	&=& \rho'_{ij} \tau_j, 			\label{R:thotaup}\\[3mm]
   \eth_{[j|} \rho'_{i|k]}     &=&   \kappa'_i \rho_{[jk]},
                               	\label{R:ethrhop} \\
 \tho' \rho'_{ij} 
	- \eth_j \kappa'_i 
	&=& - \rho'_{ik} \rho'_{kj} 
	-\kappa'_i \tau_j  - \Omega'_{ij} - \omegat \delta_{ij}.\label{fullsachsp}
\end{eqnarray}

From (\ref{comm:thotho})--(\ref{comm:etheth}), we obtain the Ricci equations for ``non-GHP'' scalars
(i.e. NP scalars that do not transform covariantly as GHP scalars), see (11b), (11n),
(11m), (11a), (11d), (11p), (11c) and (11o) in \cite{ricciclanek}:
\bea
DL_{1i}  &=& -L_{1j} \rho_{ji}, \label{eqL1i}\\
%--------------------------
 D \M{i}_{{j}{k}} &=&  - \M{i}_{{j}{l}} \rho_{lk}  ,\label{eqMijk}   \\
   %--------------------------------------------------------
	D \M{i}_{{j}{1}} & =& - \M{i}_{{j}{k}} \tau_{k}  , \label{eqMij1} \\
  %--------------------------------------------------------
 DL_{11} & =& - L_{1i} \tau_{i} 
  + \textstyle{\frac{\lambda}{d-1}}  ,  \label{eqL11}\\
%--------------------------
  \delta_{[j|} L_{1|i]}& =& - L_{11} \rho_{[ij]}  
 -  L_{1k} \M{k}_{{[i}{j]}} 
- \rho_{k[j|} \rho'_{k|i]}  ,  \label{eqdelL1i}\\
    %--------------------------------------------------------
   \delta_{[k|} \M{i}_{{j|}{l]}} 
&=&  \rho'_{i[l|} \rho_{j|k]}  
+ \rho_{i[l|} \rho'_{j|k]} +\rho_{[kl]} \M{i}_{{j}{1}} 
+   \M{i}_{{p}{[k|}} \M{p}_{{j|}{l]}} + \M{i}_{{j}{p}}  \M{p}_{{[k}{l]}} 
-\textstyle{\frac{\lambda}{d-1}}\delta_{i[k}  \delta_{l]j},\label{eqdelMijk}\\
%--------------------------------------------------------
 \bigtriangleup L_{1i} - \delta_i L_{11} &=&  L_{11}(L_{1i}- \tau_{i})
 -  \tau_{j} \rho'_{ji}+\rho_{ji} \kappa'_{j} 
-L_{1j} (\rho'_{ji}+\M{j}_{{i}{1}}) , \label{eqnabL1idelL11}\\
%----------------------------------
  \bigtriangleup \M{i}_{{j}{k}} - \delta_{k} \M{i}_{{j}{1}} 
%\!\!\!\!\!\!
&=& %\!\!\!\!\!\!  
2 \kappa'_{[j} \rho_{i]k}    
+2\tau_{[j} \rho'_{i]k} 
+ \M{i}_{{j}{1}} (L_{1k}-\tau_{k})
+ 2  \M{i}_{{l}{[1|} }\M{l}_{{j|}{k]}} - \M{i}_{{j}{l}} (\rho'_{lk} + \M{l}_{{k}{1}}  ) .
\label{eqnabMijkdelMij1}
\eea

\subsection{The $r$-dependence}

To integrate some of these equations, let
 $r$ denote an affine parameter along null geodesics generated by the multiple WAND $\Bell$ and let us work in a parallelly propagated 
frame.\footnote{The form (\ref{generalform}) of the optical matrix is compatible with the
parallel transport \cite{OrtPraPra10}.} 
Then the operator $\tho$  \eqref{thorn} reduces to the derivative along
the integral curves of $\bl$, i.e. $\tho f=Df$ for a function $f$. 
The Sachs equation (\ref{fullsachs})
for the optical matrix of the form (\ref{generalform}) can be integrated
(see \cite{OrtPraPra10}, where the $r$-dependence 
of the optical matrix of a block form has been derived): 
\bea
 & &  
s_{p}=\R\left[s_{0p}+r(s_{02}s_{03}+a_0^2)\right], 
  \qquad p=2,3, \label{s_aligned} \\
& & A_{23}= \R a_0 ,\ \ \mbox{where}\ \ 
\R=\frac{1}{1+r\big(s_{02}+s_{03}\big)+r^2(s_{02}s_{03}+a_0^2)} ,
\label{A_aligned}
\eea
 where $s_{0p}$ and $a_0$ do not depend on $r$. Note that $s_2=s$, $s_3=sb$
and $A_{23}=sa$, where $s$, $b$ and $a$ are defined in \eqref{ab}.

Integrating \eqref{2.50p}$=$\eqref{BianchiMat1p} %(\ref{8 a 9}) 
and (\ref{B1})$=$(\ref{Bianchi-2}), and using (\ref{s_aligned}, \ref{A_aligned}), we may further obtain the $r$-dependence of the non-trivial null frame components of the Ricci 
\eqref{RicciIntro} and Weyl tensors \eqref{generalformWeyl},
respectively:
\begin{align}
&{\tilde \omega}= {\tilde \omega}_0 \R,\\
\begin{split}
&\Omega'_{22}=({o_{02}}r+o_{01}) \R ,\\
&\Omega'_{23}=-\frac{\R}{a_0^2} \Big\{ r\left[({\tilde \omega}_0+o_{01})(s_{02}s_{03}+a_0^2)
-o_{02} s_{02}\right]+({\tilde \omega}_0+o_{01})s_{03}-o_{02} \Big\} ,\\
\begin{split}
\Omega'_{33} =-\frac{\R}{a_0^2}  &\Big\{  
r \left[ ( {\tilde \omega}_0+o_{01})(s_{02}s_{03}+a_0^2)(s_{02}-s_{03})+o_{02}(a_0^2+s_{02}s_{03}- s_{02}^2)\right]   \\
&\:+ ({\tilde \omega}_0+o_{01})(a_0^2+s_{02}s_{03}- s_{03}^2)+{\tilde \omega}_0 a_0^2+o_{02}(s_{03}-s_{02}) 
\Big\}, \\
\end{split}\\
&\Omega'_{(w)(w)}= - {\tilde \omega}_0 \R,
\end{split}
\end{align}
where ${\tilde \omega}_0 $, $o_{01}$ and $o_{02}$ are independent of $r$ and 
$w=4,\dots, d-1$.

 As in the Einstein spacetime case,  curvature singularities will appear (see equation (29) in \cite{OrtPraPra10} and the discussion below).

One can further integrate (\ref{R:thotau}) and (\ref{R:thorhop}): 
\bea
\tau_2&=&(\tau_{2}^{(1)}r+\tau_{2}^{(0)})\R,\\
\tau_3&=&-\frac{1}{a_0}\left\{ 
r\left[\tau_{2}^{(1)}s_{02}-\tau_{2}^{(0)}(s_{02}s_{03}+a_0^2)\right]
+\tau_{2}^{(1)}-\tau_{2}^{(0)} s_{03}\right\}\R,\\
\tau_w&=&\tau_w^{(0)},\\
\rho'_{22}&=&({\rho'_{22}}\ ^{\!\!\! (1)} r+{\rho'_{22}}\ ^{\!\!\! (0)})\R
-\frac{\lambda r}{2(d-1)}\left[1+(r s_{03}+1)\R\right],\\
\rho'_{23}&=&
\frac{1}{a_0}\left\{r\left[{\rho'_{22}}\ ^{\!\!\! (1)}s_{02}
-{\rho'_{22}}\ ^{\!\!\! (0)}(s_{02}s_{03}+a_0^2)\right]
+{\rho'_{22}}\ ^{\!\!\! (1)}-{\rho'_{22}}\ ^{\!\!\! (0)}s_{03}
+\frac{\lambda a_0^2 r^2}{2(d-1)}\right\}\R,\\
\rho'_{33}&=&({\rho'_{33}}\ ^{\!\!\! (1)} r+{\rho'_{33}}\ ^{\!\!\! (0)})\R
-\frac{\lambda r}{2(d-1)}\left[1+(r s_{02}+1)\R\right],\\
\rho'_{32}&=&
-\frac{1}{a_0}\left\{r\left[{\rho'_{33}}\ ^{\!\!\! (1)}s_{03}
-{\rho'_{33}}\ ^{\!\!\! (0)}(s_{02}s_{03}+a_0^2)\right]
+{\rho'_{33}}\ ^{\!\!\! (1)}-{\rho'_{33}}\ ^{\!\!\! (0)}s_{02}
+\frac{\lambda a_0^2 r^2}{2(d-1)}\right\}\R,\\
\rho'_{v2}&=&({\rho'_{v2}}\ ^{\!\!\! (1)} r+{\rho'_{v2}}\ ^{\!\!\! (0)})\R,
\\
\rho'_{v3}&=&\frac{1}{a_0}\left\{r\left[{\rho'_{v2}}\ ^{\!\!\! (1)}s_{02}
-{\rho'_{v2}}\ ^{\!\!\! (0)}(s_{02}s_{03}+a_0^2)\right]
+{\rho'_{v2}}\ ^{\!\!\! (1)}-{\rho'_{v2}}\ ^{\!\!\! (0)}s_{03}\right\}\R ,\\
\rho'_{iw}&=&-\frac{\lambda \delta_{iw}}{d-1}r+{\rho'_{iw}}\ ^{\!\!\! (0)},
\eea
where functions with the superscript $(0)$ or $(1)$ do not depend on $r$ and $v,w= 4, \dots,
d-1$.

Equations (\ref{eqL1i}) and (\ref{eqMijk}) can be also integrated:
\bea
L_{12}&=&(l_{12}^{(1)}r+l_{12}^{(0)})\R,\\
L_{13}&=&\frac{1}{a_0}\left\{ 
r\left[l_{12}^{(1)}s_{02}-l_{12}^{(0)}(s_{02}s_{03}+a_0^2)\right]
+l_{12}^{(1)}-l_{12}^{(0)} s_{03}\right\}\R,\\
L_{1w}&=&l_{1w}^{(0)},\\
\M{i}_{j2}&=&(\m{i}_{j2}\ ^{\!\!\! (1)}r+\m{i}_{j2}\ ^{\!\!\! (0)})\R,\\
\M{i}_{j3}&=&\frac{1}{a_0}\left\{ 
r\left[\m{i}_{j2}\ ^{\!\!\! (1)}s_{02}-\m{i}_{j2}\ ^{\!\!\! (0)}(s_{02}s_{03}+a_0^2)\right]
+\m{i}_{j2}\ ^{\!\!\! (1)}-\m{i}_{j2}\ ^{\!\!\! (0)} s_{03}\right\}\R,\\
\M{i}_{jw}&=&\m{i}_{jw}\ ^{\!\!\! (0)}.
\eea
																
The $r$-dependence of $\kappa'_i$, $\M{i}_{{j}{1}}$  and $L_{11}$ follows from  equations 
\eqref{R:thotaup}, (\ref{eqMij1}) and (\ref{eqL11}), respectively, and is given by 
\bea
\kappa'_i&=&-\int\rho'_{ij} \tau_j{\rm d} r , \\
	 \M{i}_{{j}{1}} & =& - \int \M{i}_{{j}{k}} \tau_{k} {\rm d} r , \label{Mij1} \\
  %--------------------------------------------------------
 L_{11} & =& - \int L_{1i} \tau_{i} {\rm d} r
  + \textstyle{\frac{\lambda r}{d-1}}  .  \label{L11}
\eea

To investigate the consequences of the remaining equations, one would need to introduce
the remaining coordinates and the rest of the frame, i.e. 
the frame vectors $\bn$, $\Bm_{(i)}$, similarly as was done in \cite{5DHarvey-notwist}, \cite{5DHarvey-Kerr} and \cite{5DHarvey-twist}, and this is 
is not in the scope of this paper.
%----------------------------------------------------------------------------------------------------------------------------------
\section{
{Examples: Direct products of aligned Ricci type N spacetimes  and 
$\mathbb{E}^{n}$}}\label{priklad}
%----------------------------------------------------------------------------------------------------------------------------------

Higher-dimensional examples of aligned Weyl type N, Ricci type N spacetimes 
can be easily constructed as 
direct products of four-dimensional Weyl type N, pure radiation spacetimes with  flat 
Euclidean spaces.

For a direct product $M={\tilde M}\times N$ of two arbitrary manifolds (Lorentzian 
${\tilde M}$ and Riemannian $N$),    the metric, as well as the Christoffel symbols and the Ricci and Riemann tensors, are decomposable.
Then, obviously, for any tensor $\tilde{\boldsymbol{T}}$ from ${\tilde M}$ lifted to $M$, the covariant derivative in $M$ reduces to the covariant derivative in 
${\tilde M}$, i.e.
$\nabla {\boldsymbol{T}} = \tilde{\nabla}\tilde{\boldsymbol{T}}$.

Applying this result to a direct product of an aligned algebraically special  Ricci type N spacetime of dimension $m$ with the Euclidean space $\mathbb{E}^n$ for some $n \in \mathbb{N}$, 
we obtain that $\nabla \Bell$ reduces to $\tilde \nabla \tilde \Bell$ and the form of the optical matrix \eqref{directmatrix} follows (where frames from
${\tilde M}$ and $N$  lifted to $M$ are used).
Optical scalars thus read\footnote{Note that it is a special form of (17) in \cite{typeD} for $f=$const.}
\begin{equation}\label{opticalscalarsdirect}
\theta = \frac{m-2}{d-2}\tilde \theta,\qquad 
\sigma^2 = \tilde \sigma^2 + \frac{(m-2)(d-m)}{d-2} \tilde \theta^2, \qquad
A^2 = \tilde A^2.
\end{equation}
Note that $\Bell$ is always shearing, unless the original spacetimes is Kundt. 
Moreover, $\Bell$ is expanding and twisting if and only if $\tilde \Bell$ is expanding and twisting, respectively, in the original spacetime. In addition, the resulting spacetime is Kundt if and only if the original spacetimes is Kundt.

The alignment type of the Weyl tensor
\begin{equation}\label{weylexpr}
C_{abcd} = R_{abcd} - \frac{2}{d-2}(g_{a[c} R_{d]b} - g_{b[c} R_{d]a}) + 
\frac{2}{(d-1)(d-2)}R g_{a[c}g_{d]b}
\end{equation} 
is determined by the maximal boost weight components of both Weyl tensors and both Ricci tensors (see (12)--(14) in \cite{typeD}). For non-flat $N$, the resulting Weyl type would  in general be II.
For flat Euclidean $\mathbb{E}^n$ and non-conformally flat ${\tilde M}$, the resulting Weyl type is the same as that of $\tilde{\boldsymbol{C}}$ and the Ricci tensor
remains as
\begin{equation}\label{directricci}
R_{ab} = \omegap \ell_a \ell_b,
\end{equation}
i.e. a direct product of Weyl type N, Ricci type N spacetime with a Euclidean space is
always of Weyl type N.

So, let us summarize:
\begin{propos}\label{directtype}
Consider a direct product of an aligned algebraically special Ricci type N spacetime and $\mathbb{E}^n$. 
The following statements hold:
\begin{enumerate}[label=(\roman*)]
\item The Weyl type does not change, unless the original spacetime is conformally flat. \\
If this is the case, the Weyl type changes from O to N.
The  mWAND $\Bell$ is the lift of the mWAND $\tilde \Bell$ in the original spacetime.
\item The Ricci tensor in the resulting spacetime is of the type N and it 
is aligned with the Weyl tensor. % and of the Ricci type N.
\item The optical matrix $\Brho$ corresponding to $\Bell$ admits the form 
\begin{equation}\label{directmatrix}
\boldsymbol{\rho} = 
\left(
\begin{array}{c|c}
\tilde\Brho & \boldsymbol{0} \\ \hline
\boldsymbol{0} & \boldsymbol{0}    \\ 
\end{array}
\right),
\end{equation}
where $\tilde \Brho$ is the optical matrix corresponding to $\tilde \Bell$ in the original spacetime.
\end{enumerate} 
\end{propos}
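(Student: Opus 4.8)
The plan is to establish the three claims of Proposition~\ref{directtype} by exploiting the decomposability of all geometric objects on a direct product $M = \tilde M \times \mathbb{E}^n$, reducing everything to the already-known structure on the factor $\tilde M$. The foundational observation, stated in the excerpt, is that for any tensor $\tilde{\boldsymbol{T}}$ lifted from $\tilde M$ one has $\nabla \boldsymbol{T} = \tilde\nabla \tilde{\boldsymbol{T}}$. I would first fix a null frame on $\tilde M$ adapted to the mWAND $\tilde\Bell$ and a Euclidean orthonormal frame on $\mathbb{E}^n$, and lift both to $M$; since $\mathbb{E}^n$ contributes only spacelike directions orthogonal to $\tilde M$, the lifted frame is a legitimate null frame on $M$ with $\Bell$ the lift of $\tilde\Bell$.

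For claim (iii), I would compute $L_{ab} = \nabla_b \ell_a$ on $M$. Because $\ell_a$ has no components along $\mathbb{E}^n$ and its covariant derivative reduces to $\tilde\nabla\tilde\ell_a$, the frame components $L_{(i)(j)}$ vanish whenever $i$ or $j$ labels a Euclidean direction, while the $\tilde M$-block reproduces $\tilde\rho_{ij}$ exactly. This gives the block form \eqref{directmatrix} directly. The optical scalar relations \eqref{opticalscalarsdirect} then follow by taking traces and norms over the full index range $2,\dots,d-1$ versus the reduced range on $\tilde M$: the expansion acquires the factor $(m-2)/(d-2)$ because $\theta$ averages over $d-2$ spatial directions while $\tilde\rho$ is supported on only $m-2$ of them, and the shear picks up the extra $\tilde\theta^2$ term from the mismatch between the full-space trace-removal and the $\tilde M$-trace-removal. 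The twist is unaffected since $\BA$ lives entirely in the $\tilde M$-block.

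For claim (ii), I would use the decomposability of the Ricci tensor: since $\mathbb{E}^n$ is Ricci-flat, the Ricci tensor of $M$ is the lift of the Ricci tensor of $\tilde M$, which by hypothesis equals $\omegap\,\ell_a\ell_b$ (the cosmological term is absent when we regard the traceless Ricci tensor, or is carried along trivially). This immediately yields \eqref{directricci} and shows the Ricci tensor remains type N aligned with $\Bell$. For claim (i), the decisive input is the boost-weight analysis referenced via (12)--(14) in \cite{typeD}: I would insert the decomposable $R_{abcd}$ and $R_{ab}$ into the Weyl formula \eqref{weylexpr} and track which boost-weight components survive. Flatness of $\mathbb{E}^n$ kills its Riemann contribution, so the only surviving highest boost-weight components are those inherited from $\tilde{\boldsymbol{C}}$; hence the primary Weyl type is preserved, and in the conformally flat case ($\tilde{\boldsymbol{C}}=0$) the nonzero Ricci term still sources a type~N Weyl tensor on $M$, giving the advertised jump from type O to type N.

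The main obstacle is the careful boost-weight bookkeeping in claim~(i): the Weyl tensor of a direct product is \emph{not} simply the lift of $\tilde{\boldsymbol{C}}$, because the trace-adjustment terms in \eqref{weylexpr} mix the metric of $\mathbb{E}^n$ with the Ricci tensor of $\tilde M$, and one must verify that none of these cross terms generates components of boost weight higher than $-2$ (which would spoil the type~N classification) and that for non-flat $N$ such terms would indeed produce type~II. I expect this to require explicitly writing the mixed-frame components $C_{(a)(b)(c)(d)}$ with one or more indices Euclidean and confirming, via \eqref{weyltensor} and the form $R_{ab}=\omegap\ell_a\ell_b$, that all genuinely arise at boost weight $-2$ or vanish; this is precisely where the cited relations from \cite{typeD} do the heavy lifting, and the argument amounts to checking that the aligned type~N structure of both $\tilde{\boldsymbol{C}}$ and the Ricci tensor is compatible with the trace corrections rather than being raised by them.
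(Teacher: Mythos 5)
Your proposal is correct and follows essentially the same route as the paper: decomposability of the metric, connection and curvature on $M=\tilde M\times\mathbb{E}^n$ gives $\nabla\Bell=\tilde\nabla\tilde\Bell$ and hence the block form \eqref{directmatrix} and the optical scalars \eqref{opticalscalarsdirect}, Ricci-flatness of $\mathbb{E}^n$ gives \eqref{directricci}, and the Weyl type is read off from \eqref{weylexpr} via the boost-weight bookkeeping of (12)--(14) in \cite{typeD}. Your explicit concern about the mixed-index trace-correction terms is exactly the check the paper delegates to that citation (they contribute only at boost weight $-2$ when $R_{ab}=\omegap\ell_a\ell_b$, which is also why the paper tacitly takes $\lambda=0$ here), so no new ideas are needed beyond what you describe.
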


Let us conclude with references to various classes of  four-dimensional Weyl type  N, Ricci type N 
spacetimes in table \ref{4Dexamples}. 
\begin{table}[htb]
  \begin{center}
  \begin{tabular}{|l|l|} 
    \hline $\boldsymbol{\ \ \ \ \ \ \  \ \ \ \ \  \ \ \ \ \ \ \ \rho}$ 
		& $\ \ \ \ \ \ \ \ \ \ \ \ \ \ \ \ \ \ \  \  {\tilde M}$   \\ \hline 
    $\boldsymbol{\rho}=\boldsymbol{0}$, i.e. Kundt case &    
		 \cite[ (18.8)-(18.9)]{griffiths},
\cite{ORR} {\small (The solution found in \cite{genKundt} is not included.)}\\
 $A_{ij}=0$, i.e. non-twisting case& \cite[(26.11a,b)]{stephani} \\
		$A_{ij}\not=0$, i.e. twisting case & \cite[(64)-(67)]{Plebanski2}, 
		\cite{Plebanski}\\
\hline
  \end{tabular}
  \caption{Four-dimensional  Weyl type N and  Ricci type N or O spacetimes.}
  \label{4Dexamples}
   \end{center}
\end{table}

%-----------------------------------------------------------------------------------------------%-----------------------------------
\section*{Acknowledgments}
The authors are thankful to M. Ortaggio, V. Pravda and both referees for useful comments on the draft.
%AP acknowledges support from research plan {RVO: 67985840} and 
%the Albert Einstein Center for Gravitation and Astrophysics,
%Czech Science Foundation GACR 14-37086G.   

%----------------------------------------------------------------------------------------------------------------------------------

%----------------------------------------------------------------------------------------------------------------------------------

%----------------------------------------------------------------------------------------------------------------------------------

\begin{thebibliography}{10}

\bibitem{Milsonetal05}
R.~Milson, A.~Coley, V.~Pravda, and A.~Pravdov{\'a}.
\newblock Alignment and algebraically special tensors in {Lorentzian} geometry.
\newblock {\em Int. J. Geom. Methods Mod. Phys.},
  2(01):41--61, 2005.

\bibitem{Coleyetal04}
A.~Coley, R.~Milson, V.~Pravda, and A.~Pravdov{\'a}.
\newblock Classification of the {Weyl} tensor in higher dimensions.
\newblock {\em Class. Quantum Grav.}, 21(7):L35, 2004.

\bibitem{review}
M.~Ortaggio, V.~Pravda, and A.~Pravdov{\'a}.
\newblock Algebraic classification of higher dimensional spacetimes based on
  null alignment.
\newblock {\em Class. Quantum Grav.}, 30(1):013001, 2013.


\bibitem{Ortaggioetal12}
M.~Ortaggio, V.~Pravda, A.~Pravdov{\'a}, and H.~S. Reall.
\newblock On a five-dimensional version of the {Goldberg}--{Sachs} theorem.
\newblock {\em Class. Quantum Grav.}, 29(20):205002, 2012.

\bibitem{5DHarvey-notwist}
H.~S. Reall, A.~A.~H. Graham, and C.~P. Turner.
\newblock On algebraically special vacuum spacetimes in five dimensions.
\newblock {\em Class. Quantum Grav.}, 30:055004, 2013.

\bibitem{5DHarvey-Kerr}
G.~B. de~Freitas, M.~Godazgar, and H.~S. Reall.
\newblock Uniqueness of the Kerr--de Sitter spacetime as an algebraically
  special solution in five dimensions.
\newblock {\em Commun. Math. Phys.}, 340:291-323, 2015.
%{\em {\tt arXiv:1501.02837}}, 2015.



\bibitem{5DHarvey-twist}
G.~B. de~Freitas, M.~Godazgar, and H.~S. Reall.
\newblock Twisting algebraically special solutions in five 
dimensions.
\newblock {\em {\tt arXiv:1511.02238 }}, 2015.

%\bibitem{GodRea09}
%M.~Godazgar and H.~S. Reall.
%\newblock Algebraically special axisymmetric solutions of the
%  higher-dimensional vacuum {E}instein equation.
%\newblock {\em Classical and Quantum Gravity}, 26:165009, 2009.




\bibitem{stephani}
H.~Stephani, D.~Kramer, M.~MacCallum, C.~Hoenselaers, and E.~Herlt.
\newblock {\em Exact solutions of {Einstein's} field equations}.
\newblock Cambridge University Press, 2003.

\bibitem{griffiths}
J.~B. Griffiths and J.~Podolsk{\'y}.
\newblock {\em Exact space-times in {Einstein's} general relativity}.
\newblock Cambridge University Press, 2009.

\bibitem{MalPra11}
T.~M{\'a}lek and V.~Pravda.
\newblock {Kerr}--{Schild} spacetimes with an {(A)dS} background.
\newblock {\em Class. Quantum Grav.}, 28(12):125011, 2011.

\bibitem{turek1}
M.~Gurses, T.~C. Sisman, and B.~Tekin.
\newblock AdS-plane wave and pp-wave solutions of generic gravity theories.
\newblock {\em Phys. Rev.  D}, 90:124005, 2014.

\bibitem{turek2}
M.~Gurses, S.~Hervik, T.~C. Sisman, and B.~Tekin.
\newblock Anti-de Sitter-wave solutions of higher derivative theories.
\newblock {\em Phys. Rev.  Letters}, 111:10111, 2013.

\bibitem{wils}
P.~Wils.
\newblock Homogeneous and conformally {Ricc}i flat pure radiation fields.
\newblock {\em Class. Quantum Grav.}, 6(9):1243, 1989.

\bibitem{bianchiclanek}
V.~Pravda, A.~Pravdov{\'a}, A.~Coley, and R.~Milson.
\newblock Bianchi identities in higher dimensions.
\newblock {\em Class. Quantum Grav.}, 21(12):2873, 2004.

\bibitem{bianchiclanekcor}
V.~Pravda, A.~Pravdov{\'a}, A.~Coley, and R.~Milson.
\newblock Corrigendum: Bianchi identities in higher dimensions.
\newblock {\em Class. Quantum Grav.}, 24:1691, 2007.

\bibitem{ghpclanek}
M.~Durkee, V.~Pravda, A.~Pravdov{\'a}, and H.~S. Reall.
\newblock Generalization of the {Geroch}--{Held}--{Penrose} formalism to higher
  dimensions.
\newblock {\em Class. Quantum Grav.}, 27(21):215010, 2010.

\bibitem{OrtPraPra09}
M.~Ortaggio, V.~Pravda, and A.~Pravdov{\'a}.
\newblock Higher dimensional {Kerr}--{Schild} spacetimes.
\newblock {\em Class. Quantum Grav.}, 26(2):025008, 2009.

\bibitem{OrtPraPra09b}
M.~Ortaggio, V.~Pravda, and A.~Pravdov{\'a}.
\newblock Asymptotically flat, algebraically special spacetimes in higher
  dimensions.
\newblock {\em Phys. Rev.  D}, 80(8):084041, 2009.

\bibitem{clanek11}
M.~Ortaggio, V.~Pravda, and A.~Pravdov{\'a}.
\newblock On the {Goldberg}--{Sachs} theorem in higher dimensions in the
  non-twisting case.
\newblock {\em Class. Quantum Grav.}, 30(7):075016, 2013.

\bibitem{OrtPraPra10}
M.~Ortaggio, V.~Pravda, and A.~Pravdov{\'a}.
\newblock Type {III} and {N} {Einstein} spacetimes in higher dimensions:
  general properties.
\newblock {\em Phys. Rev.  D}, 82(6):064043, 2010.

\bibitem{ricciclanek}
M.~Ortaggio, V.~Pravda, and A.~Pravdov{\'a}.
\newblock Ricci identities in higher dimensions.
\newblock {\em Class. Quantum Grav.}, 24(6):1657, 2007.


\bibitem{HDVSI}
A.~Coley, A.~Fuster, S.~Hervik, and N.~Pelavas.
\newblock Higher dimensional VSI spacetimes.
\newblock {\em Class. Quantum Grav.}, 23:7431, 2006.

\bibitem{typeD}
V.~Pravda, A.~Pravdov{\'a}, and M.~Ortaggio.
\newblock Type D Einstein spacetimes in higher dimensions.
\newblock {\em Class. Quantum Grav.}, 24:4407--4428, 2007.

\bibitem{ORR}
I.~Ozsv{\'a}th, I.~Robinson, and K.~R{\'o}zga.
\newblock Plane-fronted gravitational and electromagnetic waves in spaces with
  cosmological constant.
\newblock {\em J. Math.  Phys.}, 26(7):1755--1761, 1985.

\bibitem{genKundt}
J.~B. Griffiths, P.~Docherty, and J.~Podolsk{\'y}.
\newblock Generalised Kundt waves and their physical interpretation.
\newblock {\em Class. Quantum Grav.}, 21:207--222, 2004.


\bibitem{Plebanski2}
J.~F. Pleba{\'n}ski and K.~R{\'o}zga.
\newblock On certain twisting type-{N} solutions of {Einstein} equations with
  pure radiation energy-momentum tensor and nonvanishing shear.
\newblock {\em Class. Quantum Grav.}, 19(23):6025, 2002.

\bibitem{Plebanski}
J.~F. Pleba{\'n}ski, M.~Przanowski, and S.~Forma{\'n}ski.
\newblock Linear superposition of two type-{N} nonlinear gravitons.
\newblock {\em Phys. Lett. A}, 246(1):25--31, 1998.


\end{thebibliography}
\end{document}